\documentclass[final,5p,times,twocolumn]{elsarticle}
\usepackage{amsmath,amssymb,amsthm}
\usepackage{graphicx}
\usepackage{physics}
\usepackage{optidef}
\usepackage{graphicx,epstopdf,epsfig}
\usepackage{adjustbox}
\usepackage[dvipsnames]{xcolor}
\usepackage{float}
\usepackage[caption=false,font=footnotesize]{subfig}
\usepackage{siunitx}
\usepackage{url}
\usepackage[normalem]{ulem}
\usepackage{booktabs}
\usepackage{centernot}
\usepackage{enumitem}
\usepackage{textcomp}
\usepackage{hhline}
\usepackage{stackengine}

% ---- Definitions -----------------------------------
% Standard sets
\newcommand{\Sy}{\mathbb{S}}
\renewcommand{\Re}{\mathbb{R}}

\newcommand{\Co}{\mathbb{C}}

\newcommand{\Ur}{\mathbb{U}}

% Calligraphic/Gothic Symbols

\newcommand{\cN}{\mathcal{N}}

\newcommand{\cR}{\mathcal{R}}

% Matrices
\def\A{\mathbf{A}}

\def\D{\mathbf{D}}
\def\E{\mathbf{E}}
\def\F{\mathbf{F}}

\def\I{\mathbf{I}}
\def\K{\mathbf{K}}

\def\S{\mathbf{S}}

\def\U{\mathbf{U}}
\def\V{\mathbf{V}}
\def\W{\mathbf{W}}

\def\ZE{\mathbf{0}}
\def\LA{\mathbf{\Lambda}}

% Vectors
\def\a{\boldsymbol{a}}
\def\b{\boldsymbol{b}}
\def\c{\boldsymbol{c}}

\def\q{\boldsymbol{q}}
\def\u{\boldsymbol{u}}
\def\v{\boldsymbol{v}}

\def\x{\boldsymbol{x}}
\def\y{\boldsymbol{y}}
\def\z{\boldsymbol{z}}

\def\ze{\mathbf{0}}
\def\one{\mathbf{1}}

\def\boldeta{\boldsymbol{\eta}}

\def\bxi{\boldsymbol{\xi}}

% Functions

\DeclareMathOperator*{\argmin}{\arg\!\min}
\DeclareMathOperator*{\prox}{\mathrm{prox}}
\DeclareMathOperator*{\minimize}{minimize}

\DeclareMathOperator*{\dist}{dist}
\DeclareMathOperator*{\Span}{span}

\DeclareMathOperator*{\Prob}{\mathrm{Prob}}

\DeclareMathOperator*{\Img}{\mathcal{R}}

% Symbols after remarks and claims

 %remark
  %claim

\graphicspath{{./Images/}}
\allowdisplaybreaks

% --------Proof environment---------------------------
\newtheorem{theorem}{Theorem}[section]
\newtheorem{proposition}[theorem]{Proposition}

\newtheorem{corollary}[theorem]{Corollary}
\newtheorem{lemma}[theorem]{Lemma}

\newtheorem{remark}[theorem]{Remark}

\begin{document}

\begin{frontmatter}
\title{On Exact and Robust Recovery for Plug-and-Play Compressed Sensing\tnoteref{label1}}
\tnotetext[label1]{C. D. Athalye and K. N. Chaudhury are with the Department of Electrical Engineering, Indian Institute of Science, Bengaluru 560012, India.
At the time this research was carried out, R. G. Gavaskar was with the Department of Electrical Engineering, Indian Institute of Science, Bengaluru 560012, India. He is now with Qualcomm India.
}

\author{Ruturaj G. Gavaskar\corref{corresponding}}
\cortext[corresponding]{Corresponding author.}
\ead{ruturajg@iisc.ac.in}
\author{Chirayu D. Athalye}
\ead{chirayu@iisc.ac.in}
\author{Kunal N. Chaudhury}
\ead{kunal@iisc.ac.in}

\affiliation{organization={Department of Electrical Engineering, Indian Institute of Science},
            city={Bangalore},
            postcode={560012}, 
            country={India}}

\begin{abstract}
In the Plug-and-Play (PnP) framework, regularization is performed by plugging an off-the-shelf denoiser within a proximal algorithm such as ISTA or ADMM.
PnP produces state-of-the-art results in many imaging applications, but its theoretical aspects are not well understood. 
In particular, the present work is motivated by the question that, similar to classical compressed sensing, is it theoretically possible to recover the ground-truth using PnP?
More specifically, under what conditions on the ground-truth, the sensing matrix, and the PnP denoiser is the reconstruction guaranteed to be exact?
The foremost hurdle in this regard is the absence of an explicit regularizer\,--\,PnP is an algorithmic framework, and it is not apparent if a limit point of the PnP iterations (if one exists) is the minimizer of some objective function.
It was recently shown that it is possible to associate a convex regularizer $\Phi$ with a class of linear denoisers.
For such denoisers, the PnP iterations correspond to solving a convex optimization problem involving $\Phi$.
Motivated by this result, we consider the PnP analogue of the compressed sensing problem: $\mathrm{\min \, \Phi(\x) \, s.t. \, \A\x=\A\bxi}$, where $\A \in \Re^{m \times n}$ is a random sensing matrix, $\Phi$ is the regularizer associated with a denoiser $\W$ from the class mentioned above, and $\bxi$ is the ground-truth signal.
We prove that if the sensing matrix is Gaussian and $\bxi \in \mathrm{range}(\W)$, then the minimizer of this problem is almost surely $\bxi$  if  $\mathrm{rank}(\W) \leqslant m$, and almost never if $\mathrm{rank}(\W) > m$. 
In other words, the range of the PnP denoiser plays the role of a signal prior, and its dimension marks a sharp transition from failure to success of exact recovery. 
We are able to extend the result to subgaussian sensing matrices, except that we can guarantee exact recovery only with high probability (and not almost surely).
For noisy measurements of the form $\b=\A\bxi+\boldeta$, we consider a robust formulation: $\mathrm{\min \, \Phi(\x) \, s.t. \, \lVert \A\x -\b \rVert \leqslant \delta}$. 
We prove that if $\x^\ast$ is an optimal solution of this problem, then with high probability, the distortion $\lVert \x^\ast - \bxi\rVert$ can be bounded by $\lVert \boldeta \rVert$ and $\delta$, provided the number of measurements $m$ is sufficiently large. 
In particular, we can derive the sample complexity of compressed sensing as a function of distortion error and success rate. We discuss the extension of these results to random Fourier measurements.
To the best of our knowledge, this is the first work that gives probabilistic recovery guarantees for compressed sensing using PnP regularization.
We perform numerical experiments to validate our theoretical findings and discuss research directions stemming from this work.
\end{abstract}

\begin{keyword}
plug-and-play regularization, compressed sensing, exact recovery, robust recovery.
\end{keyword}

\end{frontmatter}

\section{Introduction}
\label{sec:intro}

Linear inverse problems such as deblurring, superresolution, and compressed sensing come up in image recovery applications from partial or corrupted measurements \cite{Mallat1999_wavelet_tour,Candes2008_compressive_sampling}.
The abstract problem is that we are given measurements $\b \in \Re^m$ of the form 
\begin{equation}
\label{eq:forward-model}
\b = \A \bxi + \boldeta,
\end{equation}
where $\bxi \in \Re^n$ is the ground-truth image, $\boldeta$ is white Gaussian noise, and $\A \in \Re^{m \times n}$ is the application-specific forward model.
The objective is to recover $\bxi$ from $\b$ and $\A$.
This problem is ill-posed as stated; hence the need for regularization \cite{Hunt1977_bayesian_methods}.
The standard approach is to pose the recovery task as an optimization problem: 
\begin{equation}
\label{eq:regularized-lsq}
\minimize_{\x \in \Re^n} \, f(\x) + \lambda \Phi(\x),
\end{equation}
where $f(\x) := \norm{\A \x - \b}^2 /2$ is the loss function, $\Phi \colon \Re^n \to \Re$ is some regularizer and $\lambda > 0$ is a tuning parameter. Here and henceforth, $\norm{\cdot}$ will denote the Euclidean norm. If $\Phi$ is convex, \eqref{eq:regularized-lsq} can be solved using iterative algorithms such as ISTA and ADMM \cite{Beck2017_optimization}.
These algorithms require that the \textit{proximal map} of $\Phi$, 
\vspace{-1mm}
\begin{equation}
\label{eq:prox}
\prox_{\Phi}(\u) = \argmin_{\x \in \Re^n}  \, \frac{1}{2} \norm{\x - \u}^2 + \Phi(\x),
\vspace{-1mm}
\end{equation}
can be computed efficiently (in closed form or iteratively).
For example, the ISTA update $\x_k \to \x_{k+1}$ is given by
\begin{equation*}
\x_{k+1} = \prox_{\tau \Phi} \big( \x_k - \tau\nabla \! f (\x_k)\big),
\end{equation*}
where $\tau > 0$ is a constant step size. From a Bayesian viewpoint, $\prox_{\Phi}(\u)$ performs denoising of $\u$ where the prior on the ground-truth $\x$ is derived from $\Phi$  \cite{Hunt1977_bayesian_methods}. 
Motivated by this observation,  \textit{Plug-and-Play} (PnP) regularization was proposed in \cite{Venkatakrishnan2013_PnP,Sreehari2016_PnP}.
In PnP, the proximal map within ISTA or ADMM is replaced by a powerful Gaussian denoiser $D : \Re^n \to \Re^n$, such as NLM \cite{Buades2005_NLM}, BM3D \cite{Dabov2007_BM3D}, etc.
For example, applied to ISTA, the PnP update $\x_k \to \x_{k+1}$ becomes
\begin{equation}
\label{pnp-ista}
\x_{k+1} = D\big( \x_k - \tau \nabla f(\x_k) \big).
\end{equation}
The updates for ADMM are more involved than ISTA, and we refer the reader to \cite{Beck2017_optimization} for details.

The core idea in PnP is to directly deploy the denoiser instead of having to specify $\Phi$ and go through its proximal map.
Although this is somewhat ad hoc, remarkably, PnP has been shown to work well in practice for many imaging applications \cite{Sreehari2016_PnP,Ahmad2020_PnP_MRI,Yuan2020_PnP_snapshot_CS,Zhang2021_PnP_deep_prior}.
Following the empirical success of PnP, its theoretical aspects have been investigated in several works; see for example \cite{Ryu2019_PnP_trained_conv,Liu2021_PnP_REC} and references therein.
A fundamental question is, can PnP be interpreted as a regularization mechanism? 
This translates to whether the denoiser in PnP can be expressed as the proximal map of a (convex) function.
This is unlikely to be true for nonlinear denoisers such as DnCNN and BM3D.
On the other hand, it is shown in a series of papers that an explicit convex regularizer $\Phi$ can be associated with specific \textit{linear} denoisers \cite{Sreehari2016_PnP,Teodoro2019_PnP_fusion,Chan2019_PnP_graph_SP,Gavaskar2021_PnP_linear_denoisers,Nair2021_PnP_fixed_point}.
In particular, the following is a restatement of \cite[Theorem 2]{Teodoro2019_PnP_fusion}.
\begin{theorem}
\label{thm:teodoro}
Let $D$ be a linear operator of the form $D(\x) = \W \x$, where $\W \in \Re^{n \times n}$ is symmetric and has eigenvalues in $[0,1]$.
Then $D$ is the proximal map of the following (extended-real-valued) convex function:
\begin{equation}
\label{eq:kernel-regularizer}
\Phi_{\W}(\x) :=
\begin{cases}
\frac{1}{2} \x^\top (\I - \W) \W^\dagger \x, & \mathrm{if} \ \x \in \cR(\W),\\
+\infty, & \mathrm{otherwise},
\end{cases}
\vspace{-1mm}
\end{equation}
\end{theorem}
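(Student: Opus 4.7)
The plan is to exploit the spectral decomposition of $\W$. Since $\W$ is symmetric, write $\W = \U \LA \U^\top$ with $\U$ orthogonal and $\LA = \mathrm{diag}(\lambda_1, \ldots, \lambda_n)$, $\lambda_i \in [0,1]$. Then $\W^\dagger = \U \LA^\dagger \U^\top$, where $\LA^\dagger$ has $1/\lambda_i$ at indices with $\lambda_i > 0$ and $0$ elsewhere, and $\cR(\W) = \Span\{\U_i : \lambda_i > 0\}$, with $\U_i$ the $i$th column of $\U$. This diagonal picture will reduce the proximal minimization to a one-dimensional calculation per eigendirection.

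First, I would verify that $\Phi_{\W}$ is proper, convex, and lower semi-continuous, which is needed for its proximal map to make sense and be single-valued. Properness is clear since $\Phi_{\W}(\ze) = 0$. Convexity follows by writing $\Phi_{\W}$ as the sum of the indicator of $\cR(\W)$ (a linear subspace, hence convex and lower semi-continuous) and the quadratic form $\frac{1}{2} \x^\top (\I - \W) \W^\dagger \x$, whose Hessian $(\I - \W)\W^\dagger$ has eigenvalues $(1-\lambda_i)/\lambda_i \geq 0$ for $\lambda_i > 0$ (and $0$ otherwise), hence is positive semi-definite.

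Next, I would compute $\prox_{\Phi_{\W}}(\u)$ by changing variables. Writing $\u = \U \c$ and restricting to candidates $\x = \U \y$ with $y_i = 0$ whenever $\lambda_i = 0$ (enforced by the $+\infty$ penalty), the objective of \eqref{eq:prox} decouples as
\begin{equation*}
\sum_{i=1}^{n} \tfrac{1}{2} (y_i - c_i)^2 \; + \!\! \sum_{i : \lambda_i > 0} \!\! \tfrac{1 - \lambda_i}{2 \lambda_i}\, y_i^{2}.
\end{equation*}
Each coordinate is independent. For $\lambda_i > 0$, the first-order condition reads $y_i / \lambda_i = c_i$, giving $y_i = \lambda_i c_i$; for $\lambda_i = 0$, feasibility forces $y_i = 0 = \lambda_i c_i$. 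Reassembling, the unique minimizer is $\x^{\star} = \U \LA \U^\top \u = \W \u = D(\u)$, as required.

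The main obstacle, and the reason a careful setup is needed, is the interplay between the pseudoinverse $\W^\dagger$ (defined only on $\cR(\W)$) and the domain restriction. If $\W$ has a nontrivial kernel, naively differentiating the quadratic without enforcing $\x \in \cR(\W)$ would be incorrect, and the indicator term contributes a normal-cone component to the subdifferential that must be accounted for. The eigenbasis alignment sidesteps this by cleanly separating $\cR(\W)$ (where the quadratic is strictly convex in the relevant directions and yields a unique minimum) from $\cR(\W)^{\perp}$ (where the indicator simply kills any component). An alternative route would be to check the optimality condition $\u - D(\u) \in \partial \Phi_{\W}(D(\u))$ via the subdifferential calculus for the sum of the quadratic and the subspace indicator, but the spectral computation above is the most transparent.
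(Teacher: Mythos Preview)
Your proof is correct. The paper does not supply its own proof of this theorem; it explicitly presents the result as a restatement of \cite[Theorem~2]{Teodoro2019_PnP_fusion} and only remarks that the expression for $\Phi_{\W}$ there ``is given using a condensed eigenvalue decomposition of $\W$'' and is equivalent to \eqref{eq:kernel-regularizer}. That remark indicates the original argument proceeds exactly as you do, via the spectral decomposition of $\W$ and a coordinatewise computation in the eigenbasis, so your approach matches the intended one.
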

where $\cR(\W)$ denotes the range of $\W$ and $\W^\dagger$ is the pseudoinverse of $\W$.

We note that the expression of $\Phi_{\W}$ in \cite{Teodoro2019_PnP_fusion}, which is given using a condensed eigenvalue decomposition of $\W$, can be shown to be equivalent to \eqref{eq:kernel-regularizer}.
%\textcolor{red}{We note that the original expression of $\Phi_{\W}$ in \cite{Teodoro2019_PnP_fusion} is written slightly differently.
%Specifically, the quadratic component in \eqref{eq:kernel-regularizer} is written in terms of the eigendecomposition of $\W$.
%We have restated it here in terms of $\W$ and $\W^\dagger$ to make the expression simpler and avoid introducing redundant additional notation.}
By Theorem \ref{thm:teodoro}, we can associate the regularizer $\Phi_{\W}$ with $\W$. 
Subsequently, it is not difficult to  establish convergence of PnP \cite{Sreehari2016_PnP,Afonso2010_CSALSA,Gavaskar2021_PnP_linear_denoisers}. For example, if $(\x_{k})$ is the sequence generated by \eqref{pnp-ista}, where $D$ is the linear denoiser $\W$, then $f(\x_k) + \lambda \Phi_{\W}(\x_k)$ converges to the minimum of $f + \lambda \Phi_{\W}$ \cite{Nair2021_PnP_fixed_point}. Practical denoisers satisfying the condition in Theorem \ref{thm:teodoro} include DSG-NLM \cite{Sreehari2016_PnP}, GMM \cite{Teodoro2019_PnP_fusion} and GLIDE \cite{Talebi2013_GLIDE}.

The next natural question is how strong is the prior induced by $\Phi_{\W}$, i.e., how well can it capture the characteristics of the ground-truth image?
We turn to the theory of compressed sensing (CS) to answer this question.
A classical result in CS theory states that if $\A$ is a random Gaussian matrix and if the ground-truth $\bxi$ is sparse, then $\bxi$ can be recovered approximately with high probability using $\ell_1$ minimization \cite{Candes2008_compressive_sampling,Candes2006_stable_recovery}, i.e., by solving the problem
\begin{mini*}
{}{\|\x\|_1}{}{}
\addConstraint{\|\A \x - \b\|}{\leqslant \delta.}
\end{mini*}
In particular, if $\boldeta = \ZE$ (clean measurements), then with high probability $\bxi$ can be recovered exactly by solving
\begin{mini}
{}{\|\x\|_1}{\label{eq:CS_exact}}{}
\addConstraint{\A \x}{= \A \bxi.}
\end{mini}
In this work, we explore whether similar guarantees can be obtained for compressed sensing using PnP.
More specifically, we ask the following questions.
\begin{enumerate}[label=(\roman*)]
\item \textbf{Exact Recovery}: Consider the analogue of \eqref{eq:CS_exact} using the PnP regularizer $\Phi_{\W}$:
\begin{mini*}
{}{\Phi_{\W}(\x)}{}{}
\addConstraint{\A \x }{= \A \bxi.}
\end{mini*}
Using \eqref{eq:kernel-regularizer}, we can rewrite the above problem as follows:
\begin{equation}
\tag{$\text{P}_0$}
\label{eq:noiseless_prob}
\begin{aligned}
&\minimize && \x^\top (\I-  \W)\W^\dagger \x \\
&\mathrm{subject \; to} && \A \x = \A \bxi, \quad \x \in \cR(\W).
\end{aligned}
\end{equation}
Is $\bxi$ the unique minimizer of this problem?
Since we work with a random $\A$, any such guarantee will be probabilistic.

\item \textbf{Robust Recovery}: Consider the general problem of recovery in the presence of measurement noise:
\begin{equation}
\tag{$\text{P}_{\delta}$}
\label{eq:noisy_prob}
\begin{aligned}
&\minimize && \x^\top (\I-  \W)\W^\dagger \x \\
&\mathrm{subject \; to} && \norm{\A \x - \b}^2 \leqslant \delta^2, \quad \x \in \cR(\W).
\end{aligned}
\end{equation}
Let  $\x^\ast$ be a minimizer of \eqref{eq:noisy_prob}.
If $\boldeta$ is small, can we guarantee that the error $\norm{\x^\ast - \bxi}$ is small?
Furthermore, can we bound $\norm{\x^\ast - \bxi}$ in terms of $\boldeta$?
\end{enumerate}

A natural question is why do we switch from the unconstrained problem \eqref{eq:regularized-lsq} to the constrained formulations \eqref{eq:noiseless_prob} and \eqref{eq:noisy_prob}?
%An immediate question that may arise from the above two points is why we switch to analyzing the constrained formulations \eqref{eq:noiseless_prob} and \eqref{eq:noisy_prob} instead of the original unconstrained problem \eqref{eq:regularized-lsq}.
The reason is that the hard constraints in \eqref{eq:noiseless_prob} and \eqref{eq:noisy_prob} make the theoretical analysis more tractable than the unconstrained problem \eqref{eq:regularized-lsq}, where the regularizer $\Phi_{\W}$ imposes only a soft penalty.
This is indeed inspired from the classical CS theory \cite{Candes2006_stable_recovery,Candes2008_compressive_sampling}, where the same trick is used for simplifying the analysis.
%Indeed, this is inspired from classical works in compressed sensing \cite{Candes2006_stable_recovery,Candes2008_compressive_sampling}, where a similar switch from an unconstrained formulation to a constrained one is employed for the sake of simplifying the theoretical study.
%In fact, the unconstrained and constrained problems are known to be equivalent for appropriate choices of $\delta$ and $\lambda$ \cite{Figueiredo2007_gradient_proj}.
Moreover, it is evident that the exact recovery is improbable to achieve in the unconstrained case (even in the absence of noise) since $\Phi_{\W}$ is a smoothly varying function. 
Subsequently, investigating a possibility of exact recovery necessitates switching to the constrained formulation. 
We also note that \eqref{eq:regularized-lsq} and \eqref{eq:noisy_prob} are equivalent for appropriate choices of $\delta$ and $\lambda$ \cite{Figueiredo2007_gradient_proj}.
On the algorithmic side, the question is whether \eqref{eq:noisy_prob} can be solved as in classical PnP \cite{Sreehari2016_PnP}, namely, by plugging denoiser $D$ into some suitable proximal algorithm? 
As shown in \cite{Unni2022_CSALSA_fusion}, this can indeed be done within the framework of the ADMM algorithm.

We expect that exploring the above questions will help us understand why PnP works well in practice.
Recent works such as \cite{Ahmad2020_PnP_MRI,Liu2021_PnP_REC} have successfully used PnP for reconstructing images from compressively sensed measurements, albeit using nonlinear denoisers.
Linear symmetric denoisers of the form in Theorem \ref{thm:teodoro} are well suited to explore questions in this area since they induce a convex regularizer that can be expressed using an explicit formula.
In fact, this property is known to be true so far only for linear denoisers \cite{Gavaskar2021_PnP_linear_denoisers,Nair2021_PnP_fixed_point}.
The linearity of $\W$ coupled with the convexity of $\Phi_{\W}$ in Theorem \ref{thm:teodoro} makes the problem tractable.

In this paper, we provide probabilistic guarantees on exact and robust compressed sensing recovery that address the questions posed above.
We focus on the case where $\A$ is a random Gaussian or Rademacher matrix.
In the Gaussian case, we prove that it is improbable to achieve exact recovery if the rank of $\W$ is greater than $m$ (Theorem \ref{thm:improbability}).
This leads us to consider low-rank denoisers such as the GLIDE filter \cite{Talebi2013_GLIDE}.
We prove that for low-rank denoisers, exact recovery is achieved with probability $1$ if $\A$ is Gaussian and $m \geqslant \rank(\W)$ (Theorem \ref{thm:exact_rec_prob_gaussian_r<m}), and with high probability if $\A$ is Rademacher and $m \geqslant O\big( \rank(\W) \big)$ (Theorem \ref{thm:exact_rec_prob_r<m}).
Furthermore, we prove that robust recovery is possible with high probability for both Gaussian and Rademacher $\A$ (Theorem \ref{thm:robust_rec_prob_r<m}). In particular, we obtain the sample complexity of robust compressed sensing as a function of distortion error and success rate.
We briefly discuss a possible extension of our results to randomized sensing matrices in bounded orthonormal systems, such as discrete Fourier or Hadamard projections (Section \ref{subsec:Fourier-Hadamard}). 
Our analysis is inspired from the classical CS theory; therefore, most of our probabilistic guarantees bear resemblance to analogous classical CS results on exact and robust recovery.
To the best of our knowledge, this is the first work to provide a connection between classical CS results and CS using PnP.
%classical probabilistic results in CS
We note that a preliminary version of this work appears in a conference proceeding \cite{Gavaskar2022_robust_recovery}, where the focus is mainly on empirical observations. 

Throughout this paper, unless specified otherwise, $\W$ denotes a $(n \times n)$ symmetric matrix with eigenvalues in $[0,1]$, i.e., $\W$ is a linear denoiser satisfying the conditions in Theorem \ref{thm:teodoro}. 
We consider symmetric denoisers in this paper just to keep the exposition simple.
Note that all our results can be extended to the case where $\W$ is a non-symmetric denoiser such as a kernel filter \cite{Milanfar2013_filtering_tour}; see discussion in Section \ref{sec:closing_remarks} in this regard. 

We state and discuss the main results in Section \ref{sec:main}. Proofs of these results are deferred to Section \ref{sec:proofs} and their
numerical validation to Section \ref{sec:numerical}.
In Section \ref{sec:discussion}, we relate our work to existing compressed sensing literature, as well as discuss some implications and future research directions arising from our work.

\section{Exact and Robust Recovery}
\label{sec:main}

In this section, we formally state and discuss our results on exact and robust recovery.
The technical proofs are deferred to Section \ref{sec:proofs}.

We first focus on the case where $\A$ is a $(m \times n)$ random sensing matrix.
In particular, if the entries of $\A$ are i.i.d. Gaussian with mean $0$ and variance $1/m$, then we refer to $\A$ as a \textit{random Gaussian matrix} \cite{Vershynin2018_HDP}.
First, we state an improbability result which implies that exact recovery is improbable from random Gaussian measurements if the rank of $\W$ is greater than the number of measurements unless $\bxi$ is a fixed point of $\W$.
\begin{theorem}
\label{thm:improbability}
Let $\A$ be a random Gaussian matrix and $\W$ be (statistically) independent of $\A$. 
Let $\bxi$ be a feasible point of \eqref{eq:noiseless_prob}, i.e., $\bxi \in \cR(\W)$. 
If $m < \rank(\W)$ and $\bxi \notin \cN(\I - \W)$, then with probability $1$, $\bxi$ is \emph{not} a minimizer of \eqref{eq:noiseless_prob}.
\end{theorem}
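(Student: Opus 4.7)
The plan is to show that, almost surely over the randomness of $\A$, there exists a feasible point of \eqref{eq:noiseless_prob} with strictly smaller objective than $\Phi_{\W}(\bxi)$. The feasible set of \eqref{eq:noiseless_prob} is the affine subspace $\bxi + U$, where $U := \cR(\W) \cap \cN(\A)$, so the task reduces to exhibiting a direction $\d \in U$ along which $\Phi_{\W}$ decreases at $\bxi$.

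First I would set $\M := (\I - \W)\W^{\dagger}$ and, using the symmetry of $\M$ on $\cR(\W)$, expand the objective along the line $\bxi + t\d$ as a quadratic in $t$:
\[
\Phi_{\W}(\bxi + t\d) \,=\, \Phi_{\W}(\bxi) + t\,\langle \M\bxi, \d\rangle + \tfrac{1}{2} t^{2}\,\d^{\top}\!\M\d.
\]
Since $\M$ is positive semidefinite on $\cR(\W)$, the existence of $\d \in U$ producing a strict decrease for small $|t|$ is equivalent to $U \not\subset (\M\bxi)^{\perp}$. A short computation using $\W \W^{\dagger}\bxi = \bxi$ for $\bxi \in \cR(\W)$ gives $\M\bxi = \W^{\dagger}\bxi - \bxi$, which vanishes iff $\W\bxi = \bxi$; the hypothesis $\bxi \notin \cN(\I - \W)$ therefore guarantees that $\M\bxi$ is a \emph{nonzero} vector of $\cR(\W)$.

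The probabilistic core of the argument is then a generic-position dimension count. I would appeal to the standard fact that for any deterministic subspace $V \subseteq \Re^{n}$ of dimension $d$ and a random Gaussian matrix $\A \in \Re^{m \times n}$ independent of $V$, one has $\dim(V \cap \cN(\A)) = \max\{0, d-m\}$ with probability one (proved by fixing a basis matrix $\B \in \Re^{n\times d}$ of $V$, identifying $V \cap \cN(\A) = \B \cdot \cN(\A\B)$, and noting that the Gaussian matrix $\A\B$ has full rank $\min\{m, d\}$ almost surely). Since $\W$ is independent of $\A$, I would condition on $\W$ (and hence on $\bxi$ and $\M\bxi$) and apply this fact twice, to $V_{1} = \cR(\W)$ and $V_{2} = \cR(\W) \cap (\M\bxi)^{\perp}$; with $r := \rank(\W) > m$ and hence $r - 1 \geqslant m$, this yields
\[
\dim U = r - m \quad \text{and} \quad \dim\bigl(U \cap (\M\bxi)^{\perp}\bigr) = r - 1 - m
\]
almost surely. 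Since the former strictly exceeds the latter, there exists almost surely some $\d \in U \setminus (\M\bxi)^{\perp}$, and choosing $t$ of the sign opposite to $\langle \M\bxi, \d\rangle$ with $|t|$ sufficiently small completes the argument.

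The main obstacle I anticipate is a careful, measure-theoretic formulation of the dimension formula $\dim(V \cap \cN(\A)) = \max\{0, d-m\}$, together with a clean conditioning step that legitimizes treating $\cR(\W)$ and $(\M\bxi)^{\perp}$ as fixed subspaces of the prescribed dimensions. Once this is in place, the rest is routine linear algebra: the vanishing of $\det(\A\B)$-type minors is a nontrivial polynomial condition on the Gaussian entries of $\A$, which has Lebesgue (hence Gaussian) measure zero.
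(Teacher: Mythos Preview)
Your proposal is correct and shares the same core reduction with the paper: both identify that $\bxi$ fails to be a minimizer precisely when the gradient $\q := (\I-\W)\W^{\dagger}\bxi$ is not orthogonal to the feasible subspace $U = \cR(\W)\cap\cN(\A)$, and both verify $\q \neq \ze$ and $\q \in \cR(\W)$ from the hypothesis $\bxi \notin \cN(\I-\W)$. The difference lies in how the probabilistic step is carried out. The paper dualizes: $\q \perp U$ is rewritten as $\q \in U^{\perp} = \cN(\W) + \cR(\A^{\top})$, i.e., the $(n-r)+m+1 \leqslant n$ vectors $\q,\u_{r+1},\ldots,\u_n,\a_1,\ldots,\a_m$ (a basis of $\cN(\W)$ together with the rows of $\A$) are linearly dependent, and an auxiliary lemma shows that fixed linearly independent vectors together with independent absolutely continuous random vectors are almost surely independent. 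You stay on the primal side and apply the dimension formula $\dim(V\cap\cN(\A)) = \max\{0,\dim V - m\}$ twice, to $V_1=\cR(\W)$ and $V_2=\cR(\W)\cap\q^{\perp}$, obtaining $\dim U = r-m > r-1-m = \dim(U\cap\q^{\perp})$ almost surely. Both routes rest on the same measure-zero fact (a nontrivial polynomial in the Gaussian entries vanishes with probability zero); your version trades the paper's bespoke independence lemma for a more standard rank/dimension argument, which is slightly more direct but otherwise equivalent.
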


The theorem can be interpreted as follows: if the rank of $\W$ is large, then unless $\bxi \in \cN(\I - \W)$, we can never recover $\bxi$ even if $\bxi$ is a feasible point of \eqref{eq:noiseless_prob}.
An interesting case is when $\W$ is doubly-stochastic and irreducible; for example, DSG-NLM \cite[Appendix B]{Sreehari2016_PnP}.
For such matrices, the Perron-Frobenius theorem implies that $\cN(\I - \W)=\{\alpha \one : \alpha \in \Re\}$ \cite{Milanfar2013_filtering_tour}.
It follows from Theorem \ref{thm:improbability} that we can almost never achieve exact recovery except for the uninteresting case where $\bxi = \alpha \one$ for some $\alpha \in \Re$; i.e., $\bxi$ is a constant signal.

\begin{corollary}
\label{cor:improbability_kernel}
If $\A$ is a random Gaussian matrix, $\W$ is doubly-stochastic and irreducible, and $\bxi$ is not a constant signal, then the probability that $\bxi$ is a solution of \eqref{eq:noiseless_prob} is $0$.
\end{corollary}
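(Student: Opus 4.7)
The plan is to derive this corollary as a direct application of Theorem \ref{thm:improbability}, with the concrete assumption that $\bxi$ is not a constant signal playing the role of the hypothesis $\bxi \notin \cN(\I - \W)$.

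First I would dispose of the trivial sub-case $\bxi \notin \cR(\W)$: such a $\bxi$ violates the feasibility constraint $\x \in \cR(\W)$ appearing in \eqref{eq:noiseless_prob}, and so cannot be a minimizer at all. It remains to handle the case $\bxi \in \cR(\W)$, where Theorem \ref{thm:improbability} applies provided $\bxi \notin \cN(\I - \W)$ and $m < \rank(\W)$ (the latter being implicit in the compressed-sensing regime being considered; otherwise Theorem \ref{thm:exact_rec_prob_gaussian_r<m}, stated later, would on the contrary guarantee exact recovery).

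The central step is to characterize $\cN(\I - \W)$ via the Perron--Frobenius theorem. Since $\W$ is doubly-stochastic, $\W \one = \one$, so $1$ is an eigenvalue of $\W$ with eigenvector $\one$; moreover, as the entries of $\W$ are nonnegative and all its eigenvalues lie in $[0,1]$, this eigenvalue coincides with the spectral radius of $\W$. The irreducibility of $\W$ then forces, via Perron--Frobenius, that $1$ is a \emph{simple} eigenvalue of $\W$, and hence
\begin{equation*}
\cN(\I - \W) \;=\; \Span\{\one\} \;=\; \{\alpha \one : \alpha \in \Re\}.
\end{equation*}
Because $\bxi$ is not a constant signal by hypothesis, $\bxi \neq \alpha \one$ for any $\alpha \in \Re$, and therefore $\bxi \notin \cN(\I - \W)$. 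All hypotheses of Theorem \ref{thm:improbability} are now in force, and applying it yields the desired conclusion that $\bxi$ is almost surely not a minimizer of \eqref{eq:noiseless_prob}.

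I do not anticipate any substantive obstacle here: the corollary is essentially a concrete specialization of Theorem \ref{thm:improbability} to a natural class of graph-based denoisers, and the only non-routine ingredient is the Perron--Frobenius step that pins down $\cN(\I - \W)$ as the line spanned by $\one$, which the paper already cites in the surrounding discussion.
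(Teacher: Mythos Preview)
Your proposal is correct and follows the same route as the paper: the paper's argument (given in the paragraph immediately preceding the corollary rather than as a separate proof) invokes the Perron--Frobenius theorem to identify $\cN(\I - \W) = \{\alpha\one : \alpha \in \Re\}$ for doubly-stochastic irreducible $\W$, and then appeals directly to Theorem~\ref{thm:improbability}. Your additional handling of the infeasible sub-case $\bxi \notin \cR(\W)$ and your remark that the condition $m < \rank(\W)$ is implicit in the compressed-sensing regime are both sensible completions of details the paper leaves tacit.
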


In the light of Theorem \ref{thm:improbability}, we focus on low-rank denoisers. As an example, consider the GLIDE filter \cite{Talebi2013_GLIDE}. This is a symmetric denoiser that satisfies the conditions in Theorem \ref{thm:teodoro} and whose rank is user-configurable. Importantly, as discussed in \cite{Talebi2013_GLIDE}, GLIDE is able to achieve denoising quality comparable to NLM \cite{Buades2005_NLM} and BM3D \cite{Dabov2007_BM3D} while having rank in the low hundreds (say, $200$).

The following theorem states that if $\bxi$ is a feasible point of \eqref{eq:noiseless_prob}, then exact recovery can be achieved almost surely from sufficiently many random Gaussian measurements.
\begin{theorem}
\label{thm:exact_rec_prob_gaussian_r<m}
Let $\A$ be a $(m \times n)$ random Gaussian matrix, and $\W$ be (statistically) independent of $\A$.
If $m \geqslant \rank(\W)$ and $\bxi \in \cR(\W)$, then $\bxi$ is the unique minimizer of \eqref{eq:noiseless_prob} with probability $1$.
\end{theorem}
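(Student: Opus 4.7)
The plan is to show that under the hypotheses, the feasible set of \eqref{eq:noiseless_prob} collapses to the singleton $\{\bxi\}$ with probability one, which makes the claim immediate and requires no analysis of the quadratic objective. Write a reduced eigendecomposition $\W = \U \LA \U^\top$, where $r := \rank(\W)$, $\U \in \Re^{n \times r}$ has orthonormal columns, and $\LA \in \Re^{r \times r}$ is diagonal with strictly positive entries. Then $\cR(\W) = \cR(\U)$, so every $\x \in \cR(\W)$ admits a unique representation $\x = \U\z$ with $\z \in \Re^r$; in particular $\bxi = \U\boldzeta$, where $\boldzeta := \U^\top \bxi$. Substituting $\x = \U\z$, the constraint $\A\x = \A\bxi$ in \eqref{eq:noiseless_prob} becomes $(\A\U)(\z - \boldzeta) = \ZE$. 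Hence if $\A\U$ has trivial null space, then $\bxi$ is the only feasible point of \eqref{eq:noiseless_prob}, and the theorem follows.

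It therefore suffices to prove that the $(m \times r)$ matrix $\A\U$ has rank $r$ with probability one. Since $\A$ is independent of $\W$, I would condition on $\W$ and treat $\U$ as deterministic. Conditionally on $\W$, the rows of $\A$ are i.i.d.\ Gaussian vectors with mean $\ZE$ and covariance $m^{-1}\I_n$, so the rotational invariance of isotropic Gaussians combined with $\U^\top \U = \I_r$ gives that the rows of $\A\U$ are i.i.d.\ Gaussian with mean $\ZE$ and covariance $m^{-1}\I_r$; equivalently, conditional on $\W$, $\A\U$ is distributed as an $(m \times r)$ random Gaussian matrix. Since $m \geqslant r$, such a matrix has full column rank almost surely, because the set of rank-deficient matrices in $\Re^{m \times r}$ is a proper algebraic subvariety and the Gaussian law is absolutely continuous with respect to Lebesgue measure. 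Integrating over $\W$ then yields the unconditional probability one.

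The main obstacle is really just bookkeeping: the eigenbasis $\U$ must be chosen as a measurable function of $\W$ for the conditioning step to be rigorous (e.g.\ via any measurable selection of orthonormal eigenvectors of $\W$), and one should be explicit that the hypothesis $\bxi \in \cR(\W)$ is what secures the feasibility of $\bxi$ so that uniqueness of the feasible point implies uniqueness of the minimizer. What is worth highlighting is that the quadratic regularizer $\x^\top(\I - \W)\W^\dagger \x$ plays no role in the argument at all; the exact recovery is forced purely by the range constraint $\x \in \cR(\W)$ shrinking the feasible set to a single point. This is consistent with the interpretation advertised in the introduction that the range of $\W$ acts as the effective signal prior and its dimension governs the transition between failure and success of recovery.
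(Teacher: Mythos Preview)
Your proposal is correct and follows essentially the same route as the paper: reduce \eqref{eq:noiseless_prob} to showing that $\A|_{\cR(\W)}$ is injective (so that $\bxi$ is the unique feasible point), parametrize $\cR(\W)$ by an orthonormal $\U \in \Re^{n \times r}$, and argue that $\A\U$ has full column rank almost surely via absolute continuity of the Gaussian law. The only cosmetic difference is that the paper invokes its Lemma~\ref{lem:lin-dependence-general} on the rows $\U^\top \a_i$ to conclude linear independence, whereas you phrase the same step as ``$\A\U$ is an $(m\times r)$ Gaussian matrix and rank-deficient matrices form a Lebesgue-null variety''; both are equivalent justifications of the same fact.
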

%\begin{proof}
%See Section \ref{subsec:proof:exact_rec_prob_gaussian_r}.
%\end{proof}
Theorem \ref{thm:exact_rec_prob_gaussian_r<m} implies that more measurements are required for exact recovery if $\W$ has a large rank.
Since the essence of compressed sensing is to work with fewer measurements, we should thus use a low-rank denoiser.
On the other hand, reducing the rank of $\W$ shrinks the space of recoverable signals since $\bxi$ is required to lie in $\cR(\W)$.
Thus, choosing the rank of the denoiser involves a trade-off between the number of measurements and the space of exactly recoverable signals.

The proof of Theorem \ref{thm:exact_rec_prob_gaussian_r<m} does not use any property of the Gaussian distribution other than absolute continuity (i.e., it admits a density function).
Therefore, Theorem \ref{thm:exact_rec_prob_gaussian_r<m} holds for any random sensing matrix $\A$ whose entries are independent continuous random variables.
Theorems \ref{thm:improbability} and \ref{thm:exact_rec_prob_gaussian_r<m} together imply that if $\bxi \in \cR(\W)$, then we obtain exact recovery from random Gaussian measurements with probability $1$ if $m \geqslant \rank(\W)$ and with probability $0$ if $m < \rank(\W)$.

Note that for random \textit{subgaussian} sensing matrices, Theorem \ref{thm:exact_rec_prob_gaussian_r<m} is not necessarily applicable because subgaussian random variables need not be continuous.
A random variable $X$ is said to be subgaussian if $\Prob\big[|X| > \alpha\big] \leqslant 2 e^{-c \alpha^2}$ for some constant $c > 0$ and all $\alpha > 0$ \cite{Vershynin2018_HDP}; examples of subgaussian random variables are given in \cite[Sec. 2.5]{Vershynin2018_HDP}.
Note that in particular, the Rademacher distribution,
\begin{equation*}
\Prob\big[X = 1\big] = \Prob\big[X = -1\big] = 1/2,
\end{equation*}
and the Gaussian distribution are both subgaussian.
An $m \times n$ random matrix $\A$ is said to be subgaussian if its entries are i.i.d. subgaussian random variables with mean $0$ and variance $1/m$.
However, in this paper, unless specified otherwise, we restrict the term ``subgaussian'' to specifically mean either Gaussian or Rademacher distributions.
The following property of subgaussian matrices can be found in \cite{Baraniuk2008_RIP_proof,Matouvsek2008_JL_variants}.
\begin{lemma}
\label{lem:gamma}
Let $\A$ be a $(m \times n)$ random subgaussian matrix.
Then there exists a function $\gamma : (0,1) \to \Re_+$ such that for any $\x \in \Re^n$ that is independent of $\A$,
\begin{equation*}
\label{eq:subgaussian_concentration}
\Prob\Big[ (1 - \epsilon) \norm{\x}^2 \leqslant \norm{\A \x}^2 \leqslant (1 + \epsilon) \norm{\x}^2 \Big] \geqslant 1 - 2 e^{-m \gamma(\epsilon)}
\end{equation*}
for all $\epsilon \in (0,1)$.
\end{lemma}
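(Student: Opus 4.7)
The plan is to exploit the row-independence of $\A$ to reduce the quadratic form $\norm{\A \x}^2$ to a sum of $m$ independent scalar summands, and then run a Chernoff/Laurent--Massart style argument separately for the Gaussian and Rademacher cases, finally setting $\gamma$ to be the pointwise minimum of the two resulting exponents. This is the standard Johnson--Lindenstrauss concentration mechanism that the cited references are based on.

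First I would reduce to the unit case: by homogeneity of both sides in $\x$, we may assume $\norm{\x} = 1$, so the objective becomes a two-sided concentration of $\norm{\A \x}^2$ around $1$. Denote the rows of $\A$ by $\a_1,\ldots,\a_m$. Since the rows are i.i.d., the scalars $Y_i := \langle \a_i, \x \rangle$ are i.i.d.\ with mean zero and variance $1/m$, and $\norm{\A \x}^2 = \sum_{i=1}^{m} Y_i^2$. The problem is thereby reduced to concentrating a sum of $m$ i.i.d.\ nonnegative random variables around its mean.

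In the Gaussian case, $\sqrt{m}\, Y_i$ is a standard normal (by rotation invariance of the isotropic Gaussian), so $m \norm{\A \x}^2 \sim \chi^2_m$, and I would invoke the classical Laurent--Massart chi-squared tail inequality to obtain $\Prob\big[\, \lvert \norm{\A \x}^2 - 1 \rvert > \epsilon \,\big] \leqslant 2 e^{- c_1 m \epsilon^2}$ for $\epsilon \in (0,1)$ and an absolute constant $c_1 > 0$. In the Rademacher case, $\sqrt{m}\, Y_i = \sum_{j} \sigma_{ij} x_j$ is a Rademacher sum, hence subgaussian with proxy $\norm{\x}^2 = 1$ by Hoeffding's lemma; consequently $m Y_i^2 - 1$ is centered and sub-exponential with a uniformly bounded Orlicz norm, and Bernstein's inequality applied to $\sum_{i=1}^{m} (m Y_i^2 - 1)$ yields $\Prob\big[\, \lvert \norm{\A \x}^2 - 1 \rvert > \epsilon \,\big] \leqslant 2 e^{- c_2 m \epsilon^2}$ on the interval $\epsilon \in (0,1)$.

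Setting $\gamma(\epsilon) := \min(c_1, c_2)\, \epsilon^2$ then produces the uniform bound claimed in the lemma. The main obstacle is the Rademacher case: unlike the Gaussian case, $\sum_i Y_i^2$ admits no closed-form distribution, and one must verify that the moment generating function of $m Y_i^2 - 1$ is controlled sharply enough to yield a quadratic-in-$\epsilon$ exponent. This is precisely what forces the restriction $\epsilon \in (0,1)$, where Bernstein's bound sits in its sub-gaussian regime; outside that range the exponent would degrade from $\epsilon^2$ to $\epsilon$, which is why the lemma is stated only for $\epsilon < 1$.
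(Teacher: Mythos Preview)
Your sketch is correct and is precisely the standard Johnson--Lindenstrauss concentration argument underlying the cited references. Note, however, that the paper does not actually prove this lemma: it cites \cite{Baraniuk2008_RIP_proof,Matouvsek2008_JL_variants} for the statement and then records, without derivation, the explicit exponents $\gamma(\epsilon) = \epsilon^2/6$ (Gaussian, from \cite{Shwartz2014_UML}) and $\gamma(\epsilon) = \epsilon^2/4 - \epsilon^3/6$ (Rademacher, from \cite{Achlioptas2001_database_friendly_random_proj}). Your Laurent--Massart/Bernstein route recovers the qualitative $\gamma(\epsilon) \asymp \epsilon^2$ shape, which suffices for the lemma as stated; the paper's choice to quote sharp constants is what feeds the explicit numerical lower bounds in \eqref{eq:m_bound_subgaussian} and \eqref{eq:m_bound_robust}. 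One packaging difference: you collapse both cases into a single $\gamma$ via a pointwise minimum, whereas the paper keeps a distinct $\gamma$ per distribution, which is how Theorems~\ref{thm:exact_rec_prob_r<m} and~\ref{thm:robust_rec_prob_r<m} are actually applied.
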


More specifically, $\gamma(\epsilon) := \epsilon^2 / 6$ if $\A$ is Gaussian \cite[Lemma 23.3]{Shwartz2014_UML}, and $\gamma(\epsilon) := \epsilon^2 / 4 - \epsilon^3 / 6$ if $\A$ is Rademacher \cite[Lemma 4]{Achlioptas2001_database_friendly_random_proj}. 
Thus, for random subgaussian matrices, $\gamma$ is continuous and strictly increasing.
The following theorem gives a probabilistic guarantee of exact recovery for subgaussian sensing matrices.
\begin{theorem}
\label{thm:exact_rec_prob_r<m}
Let $\A$ be a $(m \times n)$ random subgaussian matrix, $\W$ be independent of $\A$, and $\bxi \in \cR(\W)$.
For $\beta \in (0,1)$, suppose
\begin{equation}
\label{eq:m_bound_subgaussian}
m \geqslant \frac{\ln(2/\beta) + r \ln(12/0.99)}{\gamma(0.99/2)},
\end{equation}
where $r = \rank(\W)$ and $\gamma$ is the function in Lemma \ref{lem:gamma}.
Then with probability at least $1 - \beta$, $\bxi$ is the unique minimizer of \eqref{eq:noiseless_prob}.
\end{theorem}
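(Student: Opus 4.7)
The plan is to reduce the theorem to a statement about injectivity of $\A$ on the subspace $\cR(\W)$. If $\A$ is injective there, then the feasible set $\{\x \in \cR(\W) : \A\x = \A\bxi\}$ of \eqref{eq:noiseless_prob} collapses to the single point $\bxi$ (because $\bxi \in \cR(\W)$ by hypothesis), so $\bxi$ is trivially the unique minimizer of \eqref{eq:noiseless_prob}. Thus, it suffices to show that, with probability at least $1-\beta$, $\|\A\x\| > 0$ for every nonzero $\x \in \cR(\W)$.

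I would establish this via the standard $\epsilon$-net plus subgaussian concentration argument used in proofs of the restricted isometry property. Since $\W$ is independent of $\A$, condition on $\W$ and treat $V := \cR(\W)$ as a fixed $r$-dimensional subspace; let $S_V$ denote its unit sphere. By a volumetric argument, there exists an $\epsilon$-net $N \subset S_V$ with $|N| \leqslant (3/\epsilon)^{r}$, and the choice $\epsilon := 0.99/4$ gives $|N| \leqslant (12/0.99)^{r}$. Each $y \in N$ is deterministic once $\W$ is fixed, so Lemma~\ref{lem:gamma} with $\epsilon_0 := 0.99/2$ applies pointwise, and a union bound yields
\begin{equation*}
(1-\epsilon_0)\|y\|^2 \leqslant \|\A y\|^2 \leqslant (1+\epsilon_0)\|y\|^2 \ \text{ for all } y \in N
\end{equation*}
with probability at least $1 - 2(12/0.99)^{r} \exp\!\bigl(-m\gamma(0.99/2)\bigr)$. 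Under the hypothesis \eqref{eq:m_bound_subgaussian}, this probability is at least $1-\beta$.

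It remains to lift the near-isometry from $N$ to all of $S_V$. For this I would invoke the standard approximation step: set $M := \sup_{y \in S_V} \|\A y\|$ and, for arbitrary $y \in S_V$, pick $y' \in N$ with $\|y-y'\| \leqslant \epsilon$. Since $V$ is a subspace, $y - y'$ lies in $V$, and hence by homogeneity $\|\A(y-y')\| \leqslant \epsilon M$. The triangle inequality then yields $M \leqslant \sqrt{1+\epsilon_0} + \epsilon M$, so $M \leqslant \sqrt{1+\epsilon_0}/(1-\epsilon)$, and a matching lower bound gives $\inf_{y \in S_V} \|\A y\| \geqslant \sqrt{1-\epsilon_0} - \epsilon\sqrt{1+\epsilon_0}/(1-\epsilon)$. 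The principal obstacle is precisely the calibration of $\epsilon$ and $\epsilon_0$: they must be chosen so that (i) this lower bound is strictly positive, and (ii) the covering-number exponent and the concentration exponent match the constants $r\ln(12/0.99)$ and $\gamma(0.99/2)$ that appear in \eqref{eq:m_bound_subgaussian}. The symmetric choice $\epsilon_0 = 0.99/2$, $\epsilon = \epsilon_0/2 = 0.99/4$ meets both requirements; a short numerical check confirms that the resulting lower bound is roughly $0.71 - 0.40 > 0$, yielding the injectivity of $\A$ on $V$ and hence the theorem.
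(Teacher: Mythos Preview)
Your proposal is correct and takes essentially the same route as the paper. The paper simply packages your $\epsilon$-net plus concentration argument into Lemma~\ref{lem:subspace_bound} (a direct adaptation of \cite[Lemma~5.1]{Baraniuk2008_RIP_proof} to an arbitrary $r$-dimensional subspace) and applies it with $\epsilon=0.99$, whereas you inline that lemma's proof with the matching choices $\epsilon_0=0.99/2$ and net scale $0.99/4$; in both cases the conclusion is that $\A|_{\cR(\W)}$ is injective with the stated probability, after which $\bxi$ is the unique feasible point.
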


For the robust recovery problem \eqref{eq:noisy_prob} with $\A$ as a random subgaussian matrix, the following theorem gives a probabilistic bound on $\|\x^\ast -\bxi\|$.
\begin{theorem}
\label{thm:robust_rec_prob_r<m}
Let $\A$ be a $(m \times n)$ random subgaussian matrix, and $\W$ be independent of $\A$.
Let $\Omega_\delta \neq \emptyset$ and $\x^\ast$ be a minimizer of \eqref{eq:noisy_prob}.
For $\epsilon, \beta \in (0,1)$, suppose
\begin{equation}
\label{eq:m_bound_robust}
m \geqslant \frac{\ln(4/\beta) + r \ln(12/\epsilon)}{\gamma(\epsilon/2)},
\end{equation}
where $r = \rank(\W)$ and $\gamma$ is the function in Lemma \ref{lem:gamma}.
Then with probability at least $1 - \beta$, 
\begin{equation}
\label{eq:robust_bound_r<m}
\norm{\x^\ast - \bxi} \leqslant \left( 1 + \frac{2}{1 - \epsilon} \right) \dist \big( \bxi,\cR(\W) \big) + \frac{\delta + \norm{\boldeta}}{1 - \epsilon},
\end{equation}
where $\dist \big( \bxi,\cR(\W) \big)$ is the distance of $\bxi$ from $\cR(\W)$.
\end{theorem}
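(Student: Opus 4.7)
The plan is to build on the covering/concentration argument used for the exact-recovery result in Theorem \ref{thm:exact_rec_prob_r<m}, augmenting it with a single pointwise application of Lemma \ref{lem:gamma} to handle the possibility that $\bxi \notin \cR(\W)$. Let $\hat\bxi$ denote the orthogonal projection of $\bxi$ onto $\cR(\W)$, so $\norm{\bxi - \hat\bxi} = \dist(\bxi,\cR(\W))$. Both the minimizer $\x^\ast$ and $\hat\bxi$ lie in $\cR(\W)$ (the former by feasibility of \eqref{eq:noisy_prob}), while $\bxi - \hat\bxi$ lies in its orthogonal complement and must be treated separately. I will then control $\norm{\x^\ast - \bxi}$ via the triangle inequality $\norm{\x^\ast - \bxi} \leqslant \norm{\x^\ast - \hat\bxi} + \dist(\bxi,\cR(\W))$.

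I will split the failure probability $\beta$ into two equal halves. For the first half, the plan is to upgrade the pointwise bound in Lemma \ref{lem:gamma} to a uniform two-sided isometry on the $r$-dimensional subspace $\cR(\W)$ by a standard covering argument: fix an $(\epsilon/4)$-net $\cN$ of the unit sphere of $\cR(\W)$ with $|\cN| \leqslant (12/\epsilon)^r$, apply Lemma \ref{lem:gamma} at parameter $\epsilon/2$ to each net point, union bound, and then extend from $\cN$ to the whole sphere by the usual iterative estimate. This produces, with probability at least $1 - \beta/2$, the event
\[ E_1 \ := \ \{\,(1-\epsilon)\norm{\x} \leqslant \norm{\A\x} \leqslant (1+\epsilon)\norm{\x} \ \ \forall\, \x \in \cR(\W)\,\}, \]
and demanding $2(12/\epsilon)^r e^{-m\gamma(\epsilon/2)} \leqslant \beta/2$ rearranges to precisely \eqref{eq:m_bound_robust}. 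For the second half, I will apply Lemma \ref{lem:gamma} once more, directly to the $\A$-independent vector $\bxi - \hat\bxi$, yielding the event $E_2 := \{\norm{\A(\bxi-\hat\bxi)} \leqslant (1+\epsilon)\norm{\bxi-\hat\bxi}\}$ with probability at least $1 - \beta/2$ under the much milder condition $m\gamma(\epsilon/2) \geqslant \ln(4/\beta)$, which is absorbed by \eqref{eq:m_bound_robust}.

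On $E_1 \cap E_2$, the remaining algebra is routine. Since $\x^\ast - \hat\bxi \in \cR(\W)$, event $E_1$ gives $(1-\epsilon)\norm{\x^\ast - \hat\bxi} \leqslant \norm{\A(\x^\ast - \hat\bxi)}$. I will bound the right-hand side via the triangle inequality routed through the measurements,
\[ \norm{\A(\x^\ast - \hat\bxi)} \leqslant \norm{\A\x^\ast - \b} + \norm{\b - \A\bxi} + \norm{\A(\bxi - \hat\bxi)} \leqslant \delta + \norm{\boldeta} + (1+\epsilon)\dist(\bxi,\cR(\W)), \]
using feasibility of $\x^\ast$, the noise model $\b = \A\bxi + \boldeta$, and $E_2$. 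Invoking the crude estimate $1 + \epsilon \leqslant 2$, dividing through by $1-\epsilon$, and then appending $\dist(\bxi,\cR(\W))$ through the outer triangle inequality delivers \eqref{eq:robust_bound_r<m}.

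The main technical obstacle is Step 1: Lemma \ref{lem:gamma} supplies only pointwise concentration, whereas the argument for $\x^\ast - \hat\bxi$ needs a bound that is uniform over the (random-in-$\W$) subspace $\cR(\W)$. Obtaining this uniform isometry with the correct sample complexity is exactly what forces both the $(12/\epsilon)^r$ covering-number factor and the ``doubling'' from the net-resolution $\epsilon/2$ to the sphere-level $\epsilon$, and thereby dictates the quantitative form of the sample-complexity bound \eqref{eq:m_bound_robust}. Everything downstream of $E_1$ is essentially deterministic manipulation of the triangle inequality.
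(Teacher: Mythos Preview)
Your proposal is correct and follows essentially the same route as the paper: project $\bxi$ onto $\cR(\W)$, invoke the subspace isometry on $\cR(\W)$ (which the paper packages as Lemma \ref{lem:subspace_bound} rather than re-deriving via the net argument), apply Lemma \ref{lem:gamma} once to $\bxi-\hat\bxi$, combine via the union bound, and finish with the same triangle-inequality chain and the crude estimate $1+\epsilon\leqslant 2$. The only cosmetic difference is how the probability budget is tallied: you split $\beta$ into two halves, while the paper bounds $2e^{-m\gamma(\epsilon)}\leqslant 2(12/\epsilon)^r e^{-m\gamma(\epsilon/2)}$ and sums to $4(12/\epsilon)^r e^{-m\gamma(\epsilon/2)}\leqslant\beta$; both yield exactly \eqref{eq:m_bound_robust}.
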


For the Gaussian case in particular, the lower bound in \eqref{eq:m_bound_robust} reduces to $O\big(\epsilon^{-2} r \ln(1/\epsilon) \big)$.
Theorem \ref{thm:robust_rec_prob_r<m} involves a trade-off between the lower bound on $m$ and the upper bound on the recovery error.
For a fixed denoiser $\W$ and probability $1 - \beta$, the lower bound in \eqref{eq:m_bound_robust} decreases from $+\infty$ to a finite value as $\epsilon$ increases from $0$ to $1$; this is because $\gamma(\epsilon) \to 0$ as $\epsilon\to 0$ for both Gaussian and Rademacher matrices.
On the other hand, the upper bound in \eqref{eq:robust_bound_r<m} increases to $+\infty$ as $\epsilon$ increases from $0$ to $1$.
%Thus, reducing the error bound in \eqref{eq:robust_bound_r<m} by varying $\epsilon$ results in the lower bound on $m$ getting increased.
Note that $\beta$ can be interpreted as the failure rate of robust recovery, whereas $\epsilon$ is a parameter that controls the recovery accuracy.
According to Theorem \ref{thm:robust_rec_prob_r<m}, we need more measurements for accurate recovery with high success rate; this is consistent with intuition.

Since $n \geqslant m$, the condition given by \eqref{eq:m_bound_robust} is fulfilled provided the lower bound is at most $n$. 
In Appendix \ref{sec:minimum_n}, we explain that if $n > (\ln 4 + r \ln 12)/\gamma(1/2)$, then \eqref{eq:m_bound_robust} is satisfied for $(\beta,\epsilon)$ belonging to an appropriate subset of $(0,1) \times (0,1)$; see Proposition \ref{prp:lower-bound on m}. 
Subsequently, Theorem \ref{thm:robust_rec_prob_r<m} is applicable for large-sized signals such as images. A similar observation applies to Theorem \ref{thm:exact_rec_prob_r<m}.

\begin{remark}
Note that Lemma \ref{lem:gamma} holds for random subgaussian matrices which are neither Gaussian nor Rademacher, with the difference being that $\epsilon$ is allowed to take values in $(0,1/2]$; see \cite[Theorem 3.1]{Matouvsek2008_JL_variants}. 
Subsequently, Theorems \ref{thm:exact_rec_prob_r<m} and \ref{thm:robust_rec_prob_r<m} have counterparts for other types of subgaussian matrices.
\end{remark}

\begin{figure*}[t]
\centering
\subfloat[]{\raisebox{3.2mm}{\includegraphics[width=0.311\linewidth]{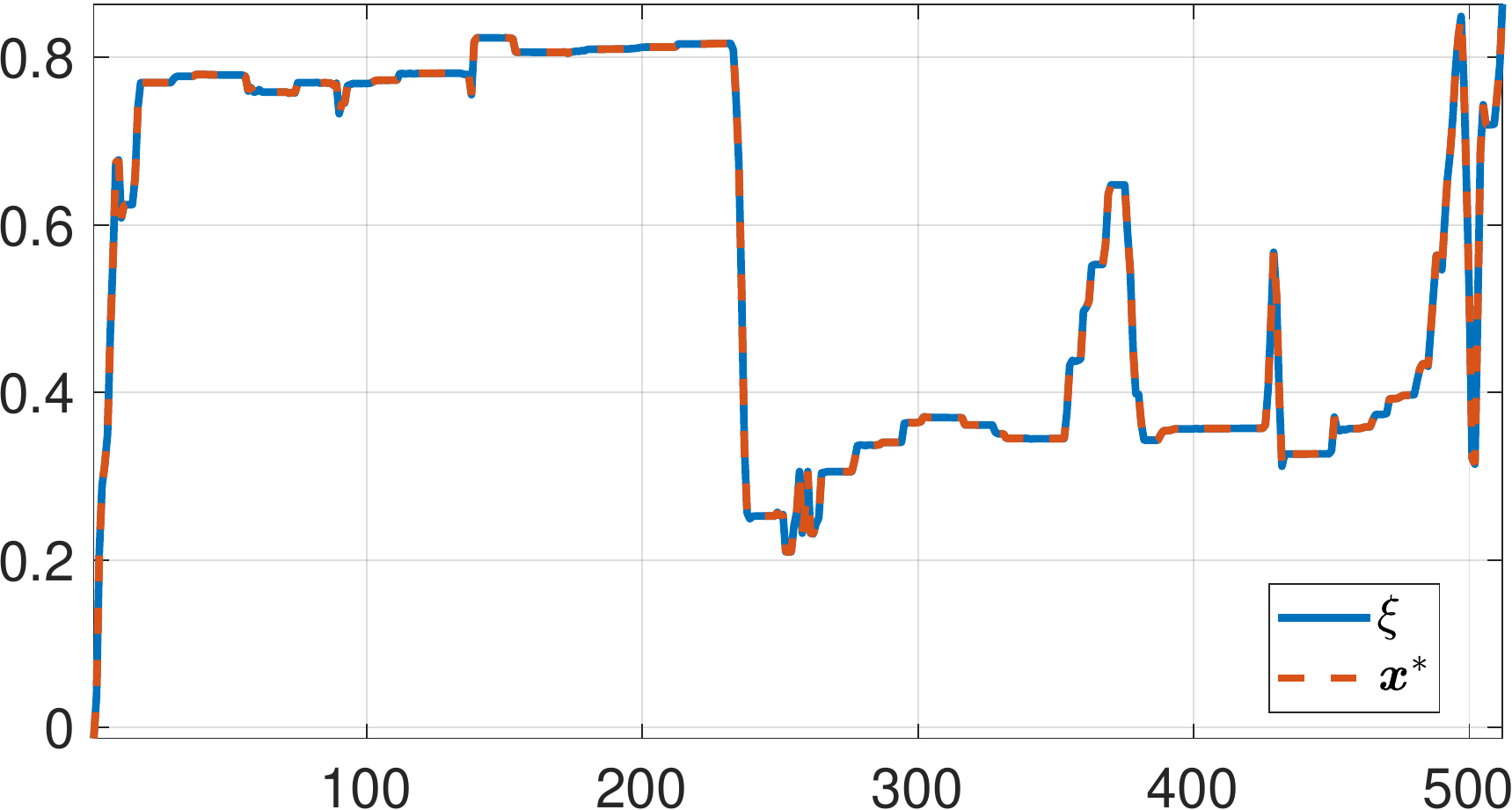}}}
\hspace{2mm}
\subfloat[]{\includegraphics[width=0.332\linewidth]{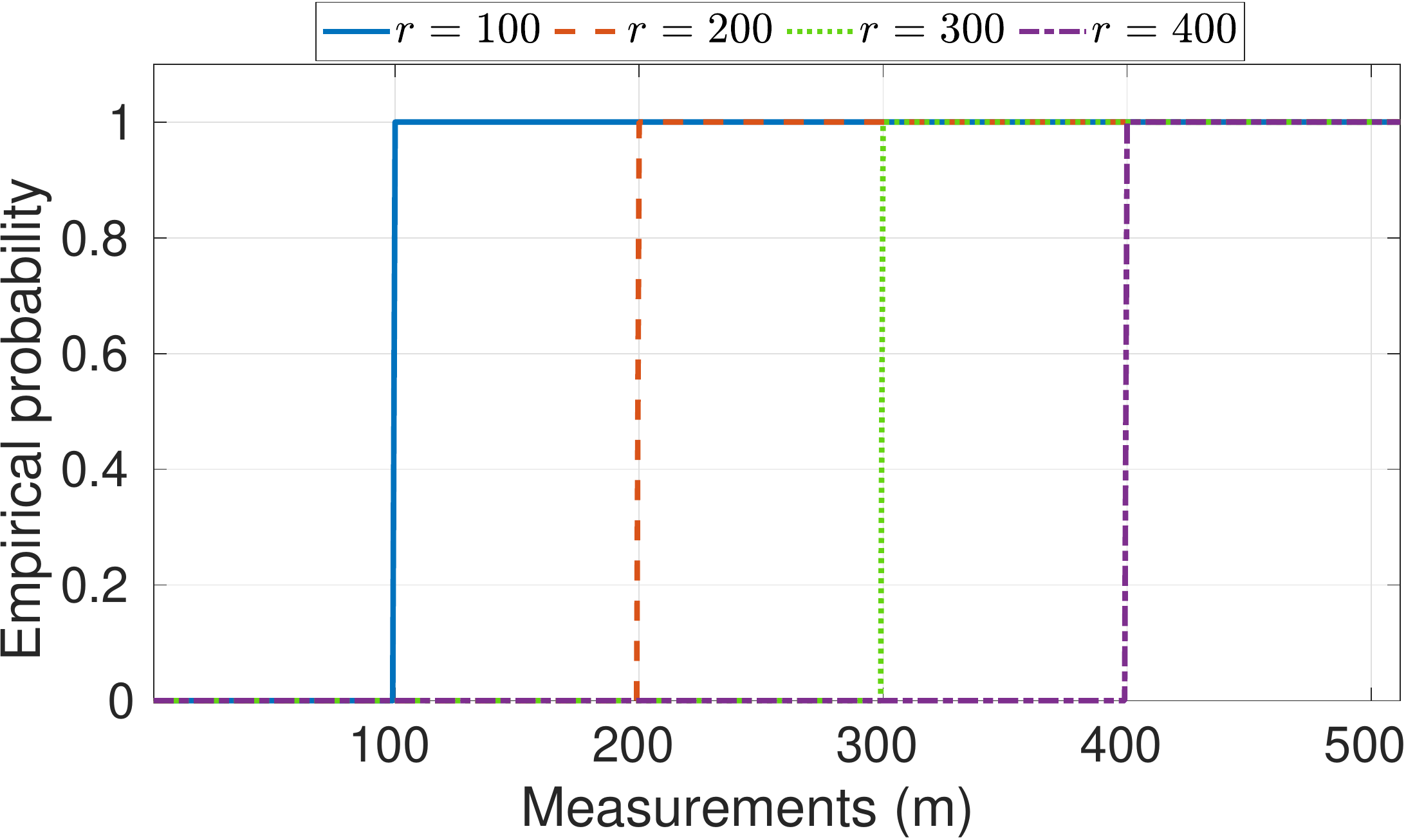}}
\hspace{2mm}
\subfloat[]{\includegraphics[width=0.24\linewidth]{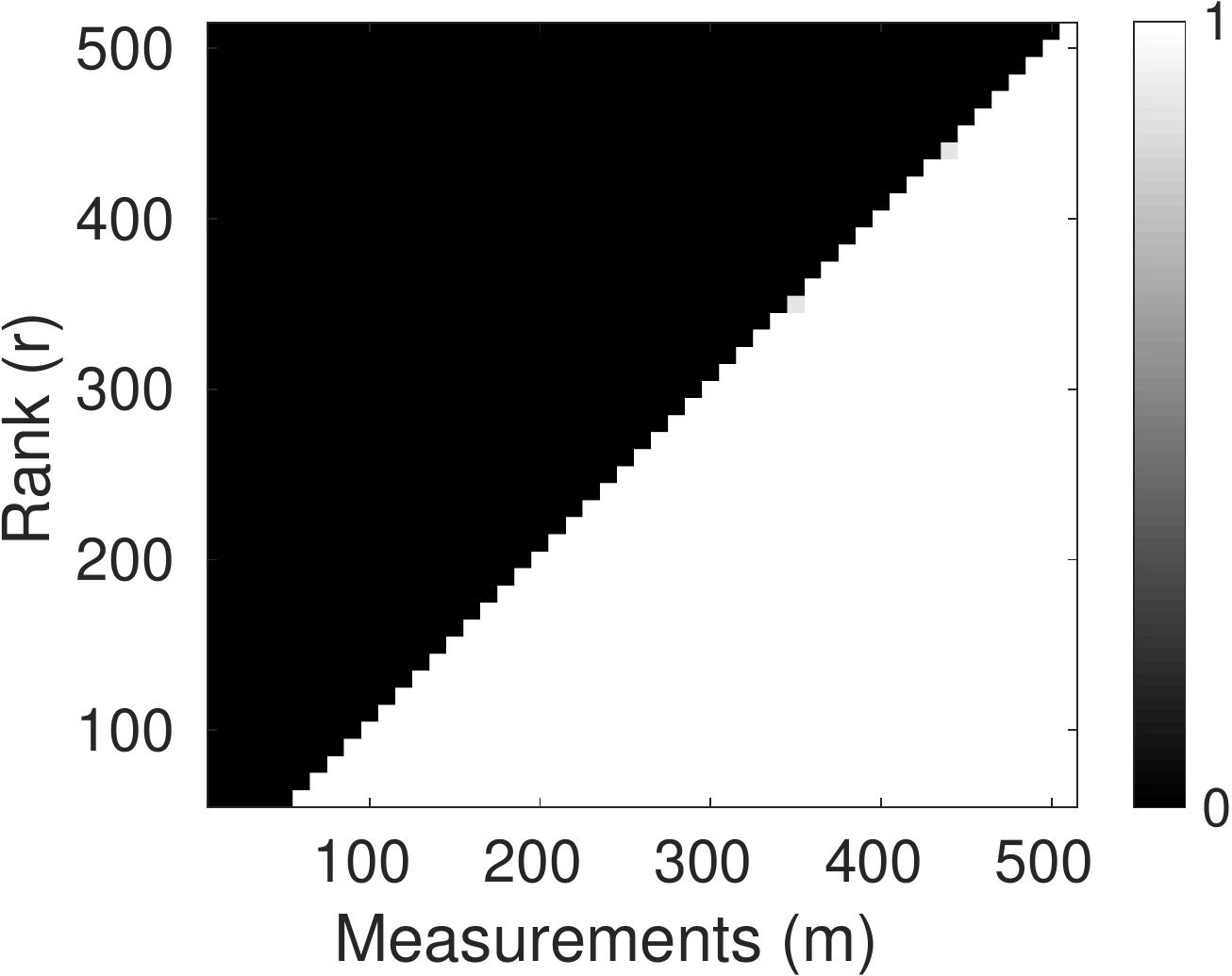}}
\caption{Results on exact recovery for a one-dimensional signal from random Gaussian measurements (see the main text for a description of the experiment). In (a), we show the ground-truth signal $\bxi$ (blue) and the recovered signal $\x^\ast$ (red) for $m = r = 100$, where $r = \rank(\W)$ and $\bxi \in \cR(\W)$. Note that $\bxi$ and $\x^\ast$ coincide exactly. In (b), we plot the empirical probability of exact recovery as a function of $m$ for $4$ different values of $r$. The  empirical probability as a function of both $m$ and $r$ is shown in (c) as a color plot. Note that for every fixed $r$, the empirical probability undergoes a sharp transition from $0$ to $1$ at $m=r$, as predicted by Theorems \ref{thm:improbability} and \ref{thm:exact_rec_prob_gaussian_r<m}.}
\label{fig:exact_rec_gaussian}
\end{figure*}

\section{Proofs of Main Results}
\label{sec:proofs}

In this section, we give the proofs of theorems in Section \ref{sec:main}.
We denote the set of $n \times n$ symmetric, positive semidefinite matrices by $\Sy_+^n$.
Note that from the properties of $\W$ stipulated in Theorem \ref{thm:teodoro}, we get that $\W, (\I - \W) \W$ and $(\I - \W) \W^\dagger$ belong to $\Sy_+^n$; these facts are used in some of the proofs.
A few of the proofs require some results from high-dimensional probability, which are included in Appendix \ref{sec:appendix:auxiliary_results}.

\subsection{Proof of Theorem \ref{thm:improbability}}
%\label{subsec:proof:improbability}
\begin{proof}
Consider the following convex program:
\begin{mini}
{}{\Psi(\y) := (\y+\bxi)^\top (\I - \W) \W^\dagger (\y + \bxi)}{}{}
\addConstraint{\y \in \cR(\W) \cap \cN(\A).}
\label{eq:exact-shifted-prob}
\end{mini}
Notice that as $\bxi \in \cR(\W)$, $\bxi$ is a solution of \eqref{eq:noiseless_prob} if and only if $\ze$ is a solution of \eqref{eq:exact-shifted-prob}. 
Using the optimality condition for a convex program, $\ze$ is a solution of \eqref{eq:exact-shifted-prob} if and only if $\nabla \Psi(\ze)^\top (\y - \ze) \geqslant 0$ for all $\y \in \cR(\W) \cap \cN(\A)$; in other words, the equivalent condition is 
\begin{equation*}
\big( (\I - \W) \W^\dagger \bxi \big)^\top \y \geqslant 0 \qquad \forall \y \in \cR(\W) \cap \cN(\A).
\end{equation*}
Note that $\cR(\W) \cap \cN(\A) = \big(\cN(\W) + \cR(\A^\top)\big)^\perp$.
As a result, $\ze$ is a solution of \eqref{eq:exact-shifted-prob} if and only if
\begin{equation}
\label{eq:perp}
(\I - \W) \W^\dagger \bxi \in \Big(\cN(\W) + \cR(\A^\top)\Big).
\end{equation}

Let $\q := (\I - \W) \W^\dagger \bxi$; now, we show that $\q \neq \ze$. 
Suppose $\bxi \in \cN\big((\I-\W)\W^\dagger\big)$.
Note that
\begin{equation*}
\cN\big((\I-\W)\W^\dagger\big) = \cN(\I-\W) \oplus \cN(\W^\dagger);
\end{equation*}
moreover, since $\W \in \Sy^n_+$, we have $\cN(\W^\dagger) = \cN(\W)$.
Thus, there exist unique $\v_1 \in \cN(\I-\W)$ and $\v_2 \in \cN(\W)$ such that $\bxi=\v_1+\v_2$. 
Now, since $\bxi \in \cR(\W)$, note that $\v_2=(\bxi-\v_1) \in \cR(\W) \cap \cN(\W) = \{\ze\}$. 
Therefore, $\bxi=\v_1 \in \cN(\I-\W)$, which is a contradiction.

Let $\rank(\W)=r \leqslant n$ and $\{\u_{r+1},\ldots,\u_n\}$ be a basis of $\cN(\W)$. 
Let $\a_1,\ldots,\a_m$ be the columns of $\A^\top$. Now, we can rewrite \eqref{eq:perp} as follows:
\begin{equation}
\label{eq:linear-dependence}
\q \in \Span(\u_{r+1},\ldots,\u_n) + \Span(\a_1,\ldots,\a_m).
\end{equation}
Notice that to prove the theorem, it suffices to show that \eqref{eq:linear-dependence} holds with probability $0$.
It follows from the eigendecomposition of $\W \in \Sy^n_+$ that $\cR \big((\I - \W)\W^\dagger\big) \subseteq \cR(\W)$.
Subsequently, $\q = (\I-\W)\W^\dagger \bxi \in \cR(\W)=\cN(\W)^\perp$. 
Therefore, $\q, \u_{r+1},\ldots,\u_n$ are linearly independent.
Moreover, since $m + 1 \leqslant r$, the cardinality of the set 
\begin{equation*}
\{ \q, \u_{r+1},\ldots,\u_n, \a_1,\ldots,\a_m \}  
\end{equation*}
is at most $n$. 
Since $\a_1,\ldots,\a_m$ are independent Gaussian random vectors, using Lemma \ref{lem:lin-dependence-general}, the probability that $\q, \u_{r+1},\ldots,\u_n, \a_1,\ldots,\a_m$ are linearly dependent is $0$.
\end{proof}

\subsection{Proof of Theorem \ref{thm:exact_rec_prob_gaussian_r<m}}
%\label{subsec:proof:exact_rec_prob_gaussian_r}
\begin{proof}
It follows from \eqref{eq:noiseless_prob} that if $\A|_{\cR(\W)}$ is injective and $\bxi \in \cR(\W)$, then $\bxi$ is the only feasible point of \eqref{eq:noiseless_prob}; thus, it is the unique minimizer \cite[Theorem 3]{Gavaskar2022_robust_recovery}. 
Therefore, it suffices to show that if $m \geqslant \cR(\W)$, then the restriction of a $(m \times n)$ random Gaussian matrix $\A$ to $\cR(\W)$ is injective with probability $1$. 

Let $\rank(\W)=r$ and the columns of $\U \in \Re^{n \times r}$ form of a basis of $\cR(\W)$. 
Let $\a_1^\top,\ldots,\a_m^\top$ be the rows of $\A$; note that $\a_1,\ldots,\a_m$ are independent Gaussian random vectors. 
Now, since $\W$ is independent of $\A$, we have $\U^\top\a_1,\ldots,\U^\top\a_m$ as independent Gaussian random vectors. 
Subsequently, by Lemma \ref{lem:lin-dependence-general}, $\U^\top\a_1,\ldots,\U^\top\a_r$ are linearly independent with probability $1$. 
Therefore, $\rank(\U^\top\A^\top) = \rank(\A\U) = r$ with probability $1$; in other words, the restriction of $\A$ to $\cR(\W)$ is injective with probability $1$.
\end{proof}

\subsection{Proof of Theorem \ref{thm:exact_rec_prob_r<m}}
%\label{subsec:proof:exact_rec_prob_r<m}
\begin{proof}
Apply Lemma \ref{lem:subspace_bound} with $\Ur = \cR(\W)$ and $\epsilon = 0.99$.
We get that with probability at least
\begin{equation*}
1 - 2 (12/0.99)^r e^{-m \gamma(0.99/2)} \geqslant 1 - \beta,
\end{equation*}
we have
\begin{equation*}
\frac{\norm{\A \z}}{\norm{\z}} \geqslant 1 - \epsilon > 0 \text{ for all } \z \in \cR(\W) \setminus \{\ZE\},
\end{equation*}
and therefore, $\A|_{\cR(\W)}$ is injective.
Subsequently, the theorem follows from the fact that if the restriction of $\A \in \Re^{m \times n}$ to $\cR(\W)$ is injective and $\bxi \in \cR(\W)$, then $\bxi$ is the unique feasible point and minimizer of \eqref{eq:noiseless_prob}.
\end{proof}

\subsection{Proof of Theorem \ref{thm:robust_rec_prob_r<m}}
%\label{subsec:proof:robust_rec_prob_r<m}
\begin{proof}
Let $\hat{\bxi} = \Pi_{\cR(\W)}(\bxi)$, where $\Pi_{\cR(\W)}$ is the orthogonal projection onto $\cR(\W)$.
Note that $(\hat{\bxi} - \bxi)$ is statistically independent of $\A$.
Thus, the inequality in Lemma \ref{lem:gamma} implies that with probability $\geqslant 1 - 2 e^{-m \gamma(\epsilon)}$, we have
\begin{equation}
\label{eq:event_G}
\|\A (\hat{\bxi} - \bxi)\| \leqslant (1 + \epsilon) \|\hat{\bxi} - \bxi\| \leqslant 2 \|\hat{\bxi} - \bxi\|.
\end{equation}
From Lemma \ref{lem:subspace_bound}, by letting $\Ur = \cR(\W)$, we get that with probability $\geqslant 1 - 2 \left(12/\epsilon\right)^r e^{-m \gamma(\epsilon/2)}$,
\begin{equation}
\label{eq:event_F}
(1 - \epsilon) \norm{\z} \leqslant \norm{\A \z} \qquad \forall \z \in \cR(\W).
\end{equation}
Using the union bound, we get that \eqref{eq:event_G} and \eqref{eq:event_F} simultaneously hold with probability $\geqslant 1 - 2 \left(12/\epsilon\right)^r e^{-m \gamma(\epsilon/2)} - 2 e^{-m \gamma(\epsilon)}$.
%Now, note that for both Gaussian and Rademacher matrices, $\gamma$ is increasing on $(0,1)$.
%Using this and the fact that $r \geqslant 1$, we get
Since $r \geqslant 1$ and $\gamma$ is an increasing function, 
\begin{equation*}
e^{-m \gamma(\epsilon)} \leqslant e^{-m \gamma(\epsilon/2)} \leqslant \left(12/\epsilon\right)^r e^{-m \gamma(\epsilon/2)};
\end{equation*}
subsequently,
\begin{equation*}
2 \left(12/\epsilon\right)^r e^{-m \gamma(\epsilon/2)} + 2 e^{-m \gamma(\epsilon)} \leqslant 4 \left(12/\epsilon\right)^r e^{-m \gamma(\epsilon/2)} \leqslant \beta,
\end{equation*}
where the last inequality follows from \eqref{eq:m_bound_robust}.
Therefore, \eqref{eq:event_G} and \eqref{eq:event_F} simultaneously hold with probability $\geqslant 1 - \beta$.

Now, in order to complete the proof, we need to show that if \eqref{eq:event_G} and \eqref{eq:event_F} hold, then \eqref{eq:robust_bound_r<m} holds.
Note that
\begin{equation}
\label{eq:split}
\norm{\x^\ast - \bxi} \leqslant \|\x^\ast - \hat{\bxi}\| + \|\hat{\bxi}-\bxi\|.
\end{equation}
Since $\x^\ast$ is a feasible point of \eqref{eq:noisy_prob}, we have $\|\A\x^\ast-\b\| \leqslant \delta$. 
Furthermore, since $\x^\ast - \hat{\bxi} \in \cR(\W)$, using \eqref{eq:event_F} and the triangle inequality, we get
\begin{align}
\label{eq:intermediate-ineq-1}
(1-\epsilon)\|\x^\ast - \hat{\bxi}\| &\leqslant  \|\A \hat{\bxi} - \A \x^\ast\| \nonumber\\
&\leqslant  \|\A \hat{\bxi} - \b\| +  \norm{\b - \A \x^\ast} \nonumber\\
&\leqslant  \|\A \hat{\bxi} - \A \bxi - \boldeta\| + \delta \nonumber\\
&\leqslant  \|\A (\hat{\bxi} - \bxi)\| + (\delta + \norm{\boldeta}).
\end{align}
Now, it follows from \eqref{eq:event_G} and \eqref{eq:intermediate-ineq-1} that
\begin{equation}
\label{eq:intermediate-ineq-2}
\|\x^\ast - \hat{\bxi}\| \leqslant \tfrac{2}{1 - \epsilon} \|\hat{\bxi} - \bxi\| + \tfrac{1}{1 - \epsilon} (\delta + \norm{\boldeta}).
\end{equation}
Since $\dist \big(\bxi, \cR(\W)\big) = \|\hat{\bxi} - \bxi\|$, combining  \eqref{eq:split} and \eqref{eq:intermediate-ineq-2}, we obtain \eqref{eq:robust_bound_r<m}. 
\end{proof}

\section{Numerical Results}
\label{sec:numerical}

In this section, we perform numerical simulations to validate the theoretical results and study the tightness of the bounds in Section \ref{sec:main}. 
We note that comparing the performance with other reconstruction methods is not our aim here, since PnP methods have already been empirically observed to produce state-of-the-art results in several imaging applications \cite{Ahmad2020_PnP_MRI,Yuan2020_PnP_snapshot_CS,Zhang2021_PnP_deep_prior}. Therefore, we solely focus on PnP regularization using linear denoisers; we do not perform extensive comparisons with competing reconstruction techniques, or even PnP using nonlinear denoisers for that matter.

\begin{figure*}[t!]
\centering
\hspace{4.2cm} \includegraphics[width=0.7\linewidth]{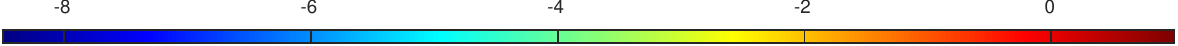}\\
\vspace{-3mm}
\subfloat[$\bxi$.]{\includegraphics[width=0.23\linewidth]{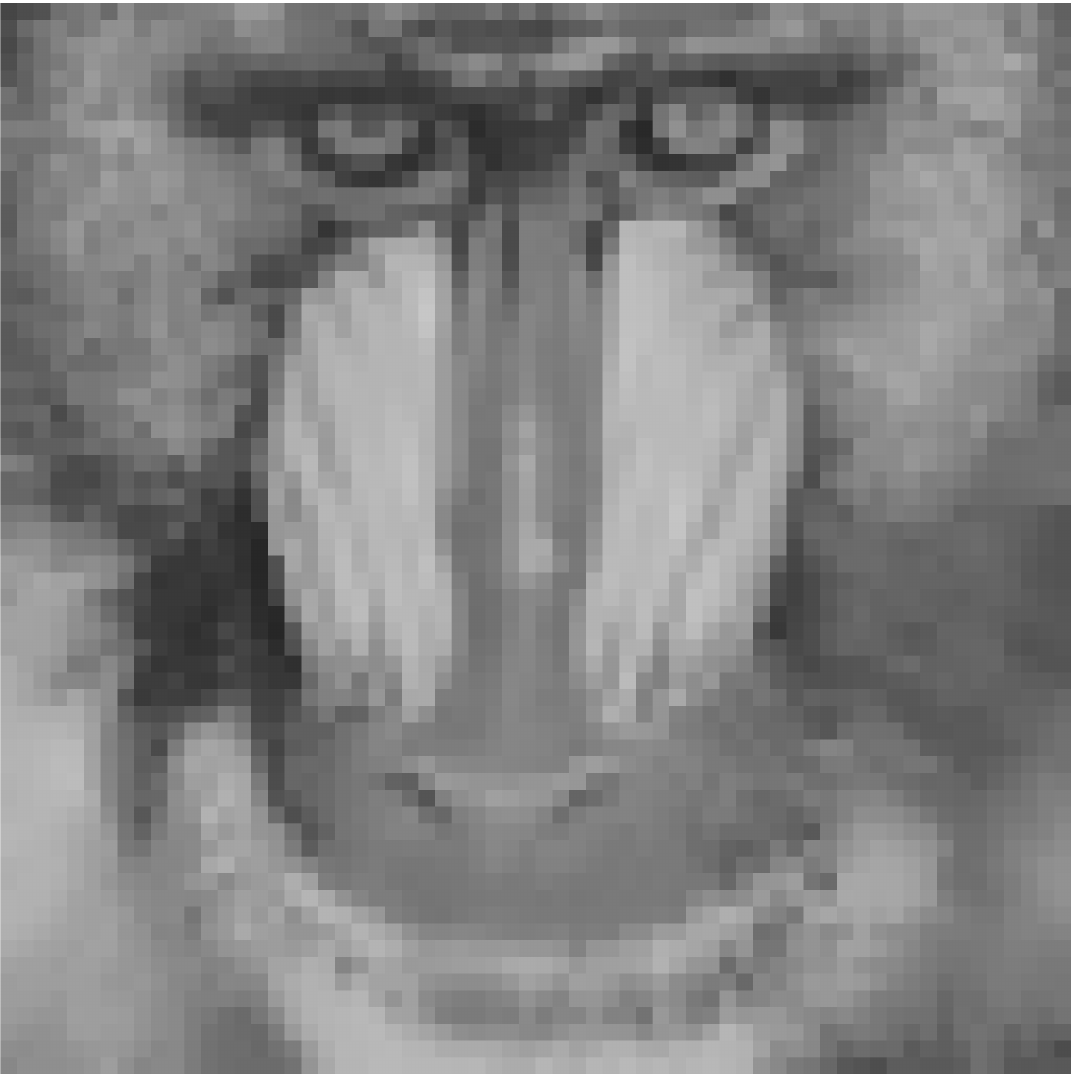}}
\hspace{0.1mm}
\subfloat[$\log_{10} |\x^\ast - \bxi|$, $m=200$.]{\includegraphics[width=0.23\linewidth]{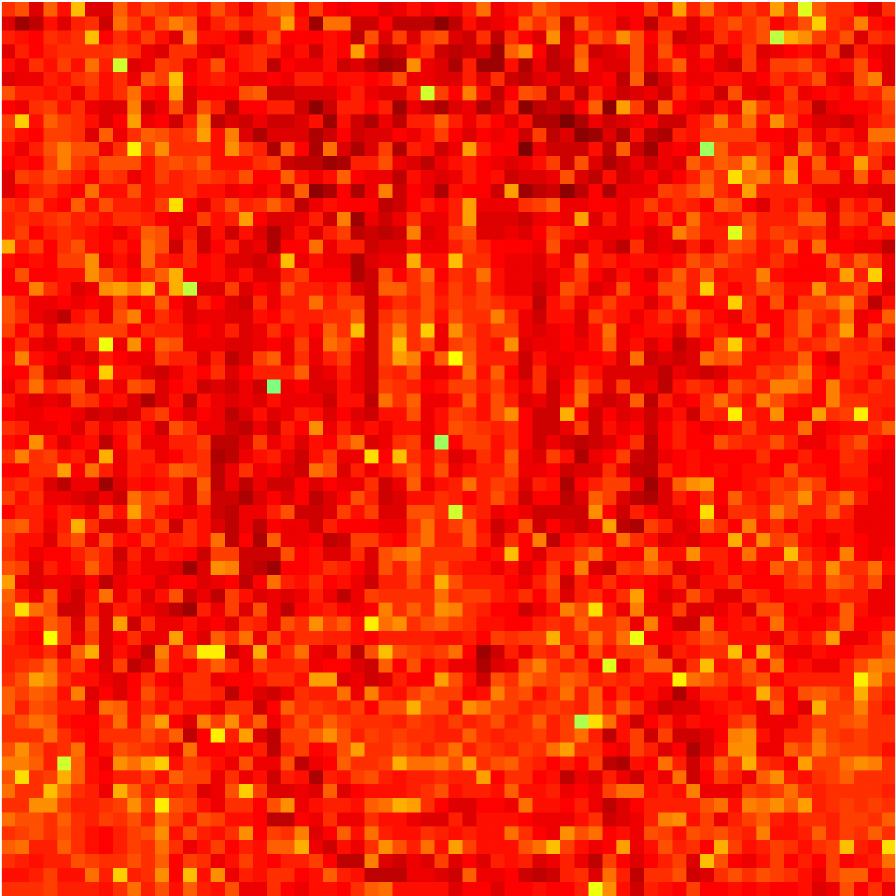}}
\hspace{0.1mm}
\subfloat[$\log_{10} |\x^\ast - \bxi|$, $m=300$.]{\includegraphics[width=0.23\linewidth]{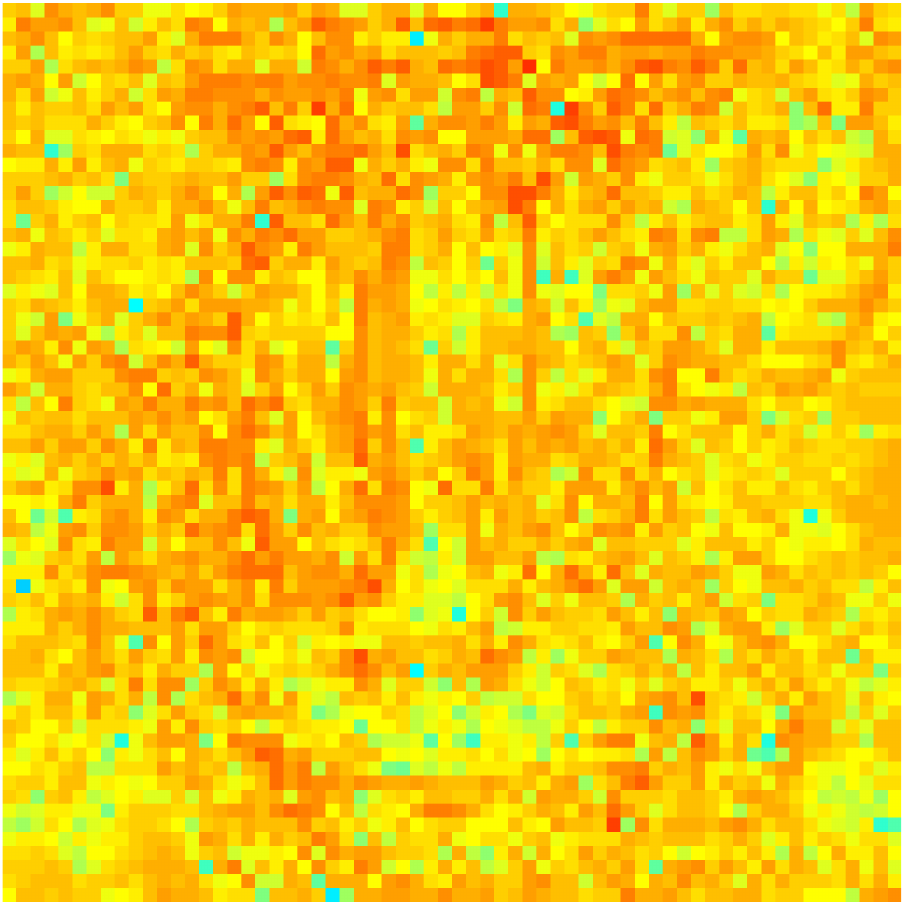}}
\hspace{0.1mm}
\subfloat[$\log_{10} |\x^\ast - \bxi|$, $m=400$.]{\includegraphics[width=0.23\linewidth]{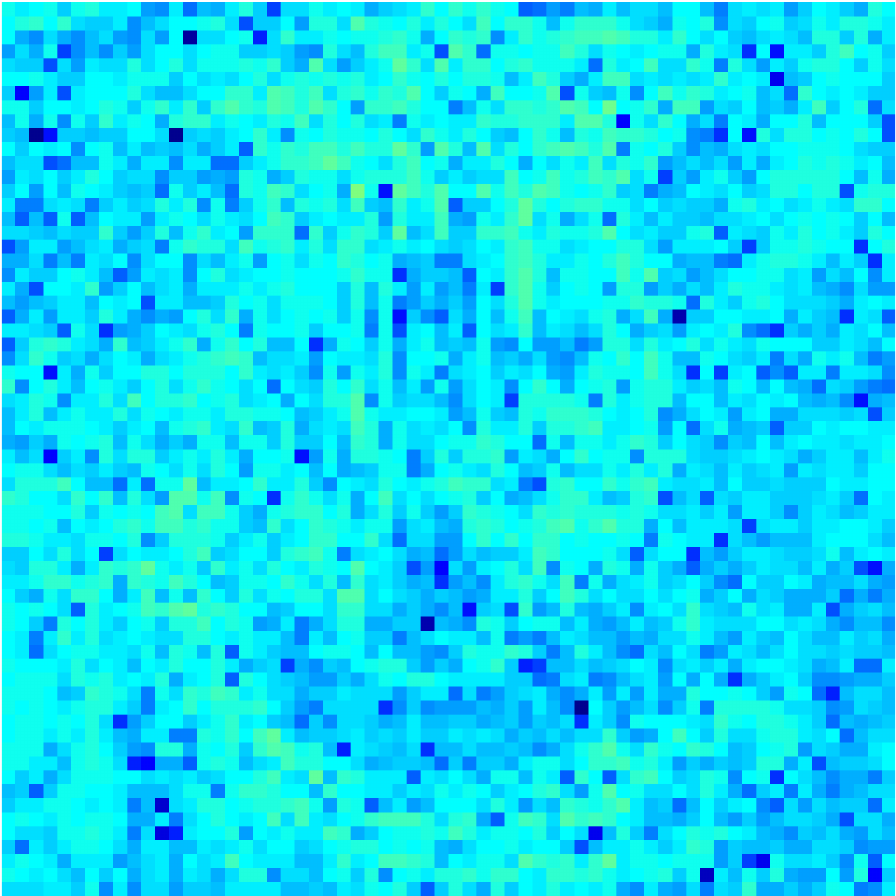}}
\caption{Ground-truth and error images for the recovery of a $64 \times 64$ image from Rademacher measurements with $\rank(\W) = 200$. The images in (b), (c) and (d) are color plots of $\log_{10} |\x^\ast - \bxi|$ for different values of $m$; refer to the colorbar on top. $\A$ is a Rademacher matrix of appropriate size in each case. As expected, the error reduces with increasing $m$.}
\label{fig:rademacher_images}
\end{figure*}

\subsection{Exact Recovery from Gaussian Measurements}

We validate Theorems \ref{thm:improbability} and \ref{thm:exact_rec_prob_gaussian_r<m} in this experiment.
We work with one-dimensional signals throughout, with $n = 512$.
We consider different $\W$'s having ranks ranging from $50$ to $510$ in steps of $10$.
For constructing $\W$ having a specified rank $r$, we take the best rank-$r$ approximation (using SVD) of the DSG-NLM matrix \cite{Sreehari2016_PnP}.
%; this is possible since the memory requirement for computing the SVD of a $512 \times 512$ matrix is not too large.
For each $\W$, we ensure that $\bxi \in \cR(\W)$ by applying $\W$ to a scan-line from a natural image; one such $\bxi$ for $r = 100$ is shown in Fig. \ref{fig:exact_rec_gaussian}(a).
$\A$ is taken to be a $(m \times n)$ random Gaussian matrix.
We consider $512$ different values of $m$, ranging from $1$ to $512$.
For each $m$, we generate $100$ random realizations of $\A$.
For each realization, we generate noiseless measurements $\y = \A \bxi$ and record the fraction of times we obtain $\x^\ast = \bxi$ (exact recovery), where $\x^\ast$ is obtained by solving \eqref{eq:noiseless_prob}.
This is the \textit{empirical probability} of exact recovery for the designated values of $r$ and $m$.
This is plotted in Figs. \ref{fig:exact_rec_gaussian}(b) and (c) for different values of $r$ and $m$.
As asserted in Theorems \ref{thm:improbability} and \ref{thm:exact_rec_prob_gaussian_r<m}, we observe exact recovery with probability $0$ when $m < r$ and with probability $1$ when $m \geqslant r$.
This validates the two theorems.

\begin{figure}[t!]
\centering
\includegraphics[width=\linewidth]{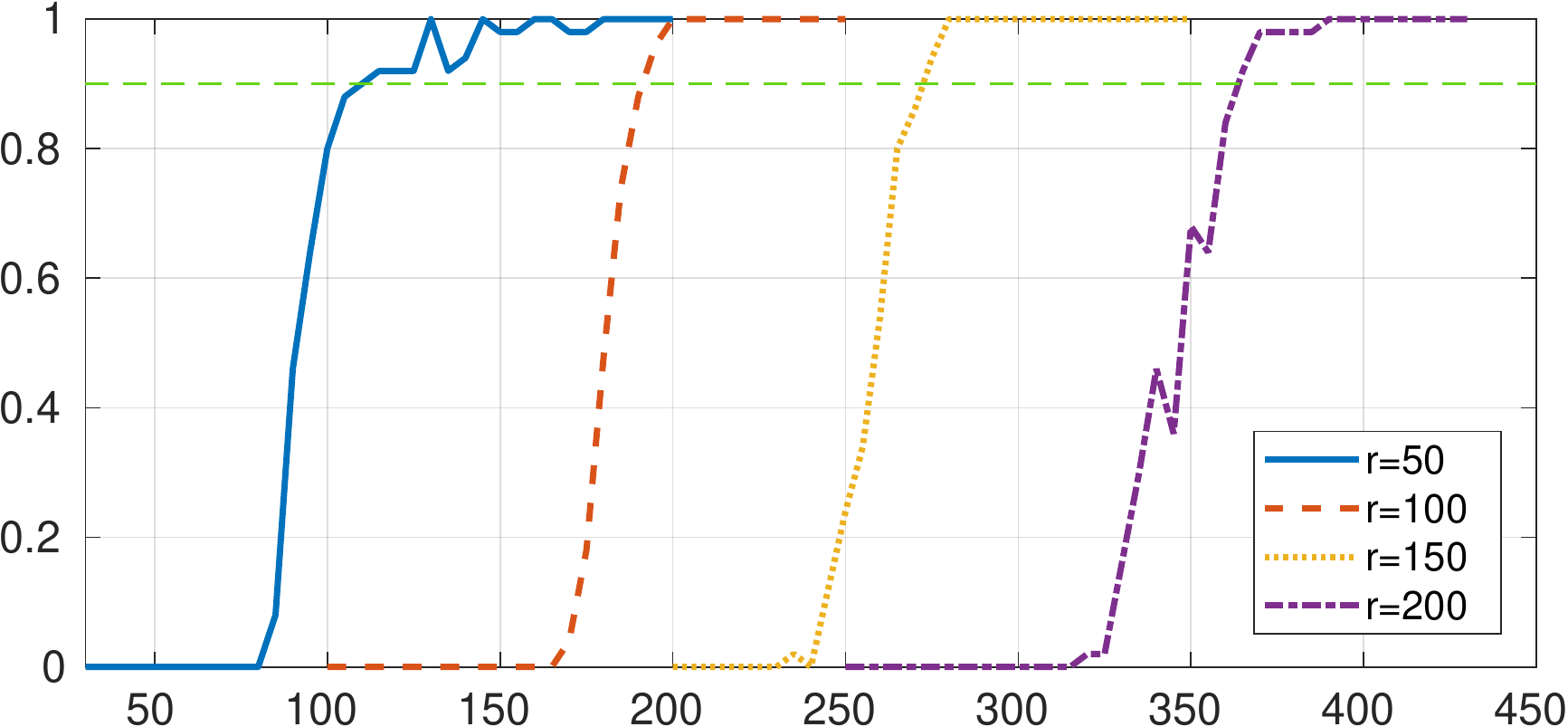}
\caption{Empirical probability of exact recovery (vertical axis) of a $64 \times 64$ image from Rademacher measurements as a function of $m$ (horizontal axis), for different values of $r = \rank(\W)$. The green horizontal line indicates a probability of $0.9$; this is used in Table \ref{tab:exact_rademacher_bound}.}
\label{fig:exact_rec_rademacher}
\end{figure}

\subsection{Exact Recovery from Subgaussian Measurements}
\label{subsec:exact_rademacher}

In this experiment, we compute the empirical probability of exact recovery for the case where $\A$ is a random Rademacher matrix, i.e., each $\A_{ij}$ takes values $\pm 1/\sqrt{n}$ with equal probability.
For this experiment, we fix $\bxi$ to be a $64 \times 64$ image ($n = 4096$) and $\W$ to be the GLIDE filter \cite{Talebi2013_GLIDE}.
Recall that the rank $r$ of GLIDE is user-configurable.
Since the image size (and hence the run-time of the recovery algorithm) is large, we restrict ourselves to $4$ different values of $r$, namely $50, 100, 150, 200$.
For each $r$, we generate $\bxi \in \cR(\W)$ by applying $\W$ to the \textit{Mandril} image.
As an example, the image $\bxi$ for $r = 200$ is shown in Fig. \ref{fig:rademacher_images}(a).
Further, we fix a few different values of $m$ and perform $100$ random trials in which we draw a realization of $\A$, set $\y = \A \bxi$, and obtain $\x^\ast$ by solving \eqref{eq:noiseless_prob}.
The solution is obtained by running $400$ iterations of the CSALSA algorithm \cite{Afonso2010_CSALSA}.
We assume that exact recovery is achieved if the PSNR of the image $\x^\ast$ with respect to $\bxi$ is greater than $80$ dB.
This corresponds to a mean-squared error (MSE) of less than $10^{-8}$ (assuming that the image intensity values are between $0$ and $1$). Thus, for each $r$ and $m$ we record the empirical probability of exact recovery.

A plot of the empirical probability is shown in Fig. \ref{fig:exact_rec_rademacher} as a function of $m$ for different values or $r$.
As expected, for a fixed $r$, the probability increases as $m$ increases.
On the other hand, for a fixed value $p \in (0,1]$, the minimum value of $m$ required to obtain exact recovery with probability at least $p$ increases with $r$.
In Fig. \ref{fig:rademacher_images}, we show an example of the error image $|\x^\ast - \bxi|$ (on a log scale) for $r=200$, for three different values of $m$.
Note that the error decreases as $m$ increases, which is consistent with what we intuitively expect.

To examine the tightness of the lower bound \eqref{eq:m_bound_subgaussian}, we calculate the right side of \eqref{eq:m_bound_subgaussian} for $\beta = 0.1$ and the aforementioned values of $r$; this gives the theoretical minimum number of measurements that guarantee exact recovery with probability at least $0.9$.
The values are noted in Table \ref{tab:exact_rademacher_bound}.
We note that the actual minimum value of $m$ (found using Fig. \ref{fig:exact_rec_rademacher}) is much smaller than the theoretical bound in each case, indicating that the bound in \eqref{eq:m_bound_subgaussian} is quite loose.
Since the image size is small, some of the lower bounds are, in fact, greater than $n$; see the discussion at the end of Section \ref{sec:main}.
%This opens up the question of whether the bound can be made tighter using alternative proof mechanisms; we leave this for future work.

\begin{table}[t!]
\centering
\caption{Theoretical and empirical lower bounds on $m$ for achieving exact recovery with probability $\geqslant 0.9$ from Rademacher measurements, for different values of $r = \rank(\W)$. The theoretical bound is the right side of \eqref{eq:m_bound_subgaussian}, and the empirical bound is approximately found from the plots in Fig. \ref{fig:exact_rec_rademacher}.}
\label{tab:exact_rademacher_bound}
\begin{tabular}{lcccc}
\toprule
$r$ & $50$ & $100$ & $150$ & $200$ \\
\midrule
$m$ (Theoretical) & $3113$ & $6152$ & $9192$ & $12231$ \\
$m$ (Empirical) & $120$ & $190$ & $280$ & $370$ \\
\bottomrule
\end{tabular}
\end{table}

\begin{figure*}[t]
\centering
\includegraphics[width=\linewidth]{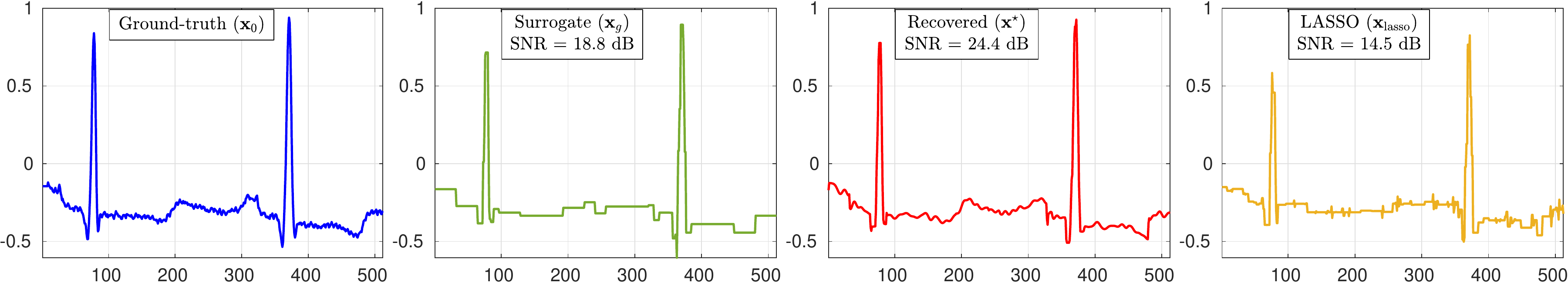}\\
\caption{ECG signal recovery from $m=150$ random Gaussian noisy measurements. The signal length is $n=512$. The surrogate signal $\x_g$ is obtained using CoSaMP \cite{Needell2009_cosamp}, and is used to construct $\W$. The final estimate $\x^\ast$ is then found by solving \eqref{eq:noisy_prob}. The reconstruction using LASSO is shown for comparison.}
\label{fig:ecg}
\end{figure*}

\subsection{Robust Recovery}
\label{subsec:robust}

Theorem \ref{thm:robust_rec_prob_r<m} implies that if we fix the parameters $\beta$, $\epsilon$, $\boldeta$ and $\delta$, then the bound \eqref{eq:robust_bound_r<m} holds with probability at least $1 - \beta$ across different realizations of $\A$, provided $m$ is sufficiently large.
We take $\A$ to be a Rademacher matrix and $\bxi$ as the \textit{Mandril} image (resized to $64 \times 64$).
$\W$ is the GLIDE filter computed using $\bxi$ as the guide image.
Since the guide image is $\bxi$ itself, we expect that $\bxi \notin \cR(\W)$; indeed, we verified this numerically by computing the distance of $\bxi$ from $\cR(\W)$.
We fix $\beta = 0.1$, $\boldeta$ as Gaussian noise with variance $0.05^2$, $\delta = 1.2 \norm{\boldeta}$ (to ensure that problem \eqref{eq:noisy_prob} is feasible), and $\epsilon = 0.8$.
Plugging these values into Theorem \ref{thm:robust_rec_prob_r<m}, we get that with probability at least $0.9$, the error $\norm{\x^\ast - \bxi}$ is less than the right side of \eqref{eq:robust_bound_r<m} if
\begin{equation*}
m \geqslant 125.75 + 92.32 r.
\end{equation*}
For $r = 50, 100, 150, 200$, the right side of the above inequality evaluates to $4742, 9358, 13924$ and $18590$.
In practice, by conducting $100$ random trials, we observed that for all four values of $r$, \eqref{eq:robust_bound_r<m} holds with probability $1$ even for $m$ as low as $500$ (and higher).
For the case $r=200$, the right side of \eqref{eq:robust_bound_r<m} evaluates to $\approx 3000$, whereas the average value of the left side over the $100$ trials is $\approx 370$.
Thus, the bound in \eqref{eq:m_bound_robust}, as well as the probability bound $1 - \beta$, are observed to be loose in practice.

\subsection{Application: ECG Signal Recovery}

While PnP has mostly been used for imaging applications in the past, it can in principle be used for compressed sensing of other signals. In particular, we show how it can be used for ECG signal recovery from compressed Gaussian measurements, where the reconstruction is performed by solving \eqref{eq:noisy_prob}.
We take the ground-truth signal $\bxi$ as the first $512$ samples of an ECG signal from the MIT-BIH Arrhythmia Database \cite{Moody2001_MIT_ECG_database}, i.e., $n = 512$.
For $m = 150$, we generate the measurement vector $\b = \A \bxi + \boldeta$, where $\A \in \Re^{m \times n}$ is a random Gaussian matrix and $\boldeta$ is Gaussian noise having standard deviation $5 \times 10^{-3}$.
$\W$ is taken to be the SVD-based low-rank approximation of DSG-NLM with rank $150$.
The DSG-NLM denoiser is computed from a guide (surrogate) signal (see Section \ref{subsec:independence}); we obtain a surrogate signal by running $20$ iterations of the computationally efficient CoSaMP algorithm \cite{Needell2009_cosamp}.
The final solution $\x^\ast$ is then obtained by solving \eqref{eq:noisy_prob}, where feasibility is ensured by setting $\delta = 2 \norm{\boldeta}$.
The result is shown in Fig. \ref{fig:ecg}, along with the signal-to-noise ratio (SNR) of $\x^\ast$ with respect to $\bxi$; for comparison, we also show the reconstruction obtained using the LASSO algorithm ($\ell_1$ minimization) \cite{Candes2008_compressive_sampling}. 
%in Fig. \ref{fig:ecg}.
Although we do not advocate the superiority of PnP over existing methods, it is evident from the SNR levels that its reconstruction quality is better than LASSO.

\section{Discussion}
\label{sec:discussion}

\subsection{Relation to Existing Work}
Our results are similar in spirit to those in classical compressed sensing \cite{Candes2006_stable_recovery,Candes2008_compressive_sampling,Shwartz2014_UML}.
For example, one of the central results in compressed sensing is as follows: An $r$-sparse signal in $\Re^n$ (i.e., a signal having at most $r$ non-zero samples) can, with high probability, be recovered exactly from $m$ noiseless random Gaussian measurements if $m \geqslant O(r \log n)$ \cite[Sec. 23.3]{Shwartz2014_UML}.
On the other hand, our result requires $m \geqslant \rank(\W)$ for exact recovery from Gaussian measurements.
In this sense, $\rank(\W)$ plays a similar role to sparsity in classical compressed sensing.

For PnP regularization, however, a probabilistic analysis of exact and robust recovery has not been attempted before to the best of our knowledge.
The papers \cite{Liu2021_PnP_REC} and \cite{Bora2017_CSGM} are somewhat related to the current work.
In \cite{Liu2021_PnP_REC}, error bounds are established for images recovered from compressive measurements using the PnP-ISTA algorithm.
The main difference compared to our work is that \cite{Liu2021_PnP_REC} takes a purely algorithmic approach, whereas we view the recovery problem from an optimization perspective using the explicit PnP regularizer $\Phi_{\W}$.
Moreover, probabilistic guarantees are not given in \cite{Liu2021_PnP_REC}.
%The paper \cite{Bora2017_CSGM} uses deep generative priors (and is not related to PnP), but the analysis there is somewhat similar to ours; moreover, it gives probabilistic guarantees as well.
In \cite{Bora2017_CSGM}, the recovered image is taken to be the minimizer of $\norm{\A\x - \b}^2$, where the feasible set is the range of a generative model such as a generative adversarial network or variational auto-encoder.
Our work is similar in that we also require the reconstruction to lie in the range of a denoiser.
However, there is no obvious direct relationship between our work and \cite{Bora2017_CSGM}.
Another difference is the denoisers considered---while we work with linear denoisers, \cite{Liu2021_PnP_REC,Bora2017_CSGM} use neural networks as the denoiser or generative model.

\subsection{Extension to Randomized Fourier Measurements}
\label{subsec:Fourier-Hadamard}

Theorems \ref{thm:exact_rec_prob_r<m} and \ref{thm:robust_rec_prob_r<m} apply to sensing matrices that satisfy the concentration inequality in Lemma \ref{lem:gamma}.
The well-known Johnson-Lindenstrauss (JL) Lemma, stated below as Lemma \ref{lem:JL}, is a generalization of Lemma \ref{lem:gamma} to a finite set of points as opposed to a single point \cite{Baraniuk2008_RIP_proof,Matouvsek2008_JL_variants,Shwartz2014_UML}.
Thus, we can conclude that the recovery guarantees in Theorems \ref{thm:exact_rec_prob_r<m} and \ref{thm:robust_rec_prob_r<m} apply in general to sensing matrices satisfying the JL Lemma.
\begin{lemma}
\label{lem:JL}
Let $\A$ be a random subgaussian matrix, and $Q \subseteq \Re^n$ be a finite set of unit vectors (w.r.t. the $\ell_2$ norm) that are independent of $\A$.
Then for any $\epsilon \in (0,1)$, with probability at least $1 - 2 |Q| e^{- m \gamma(\epsilon)}$ we have
\begin{equation*}
1 - \epsilon \leqslant \norm{\A \x}^2 \leqslant 1 + \epsilon \text{ for all } \x \in Q,
\end{equation*}
where $\gamma$ is the function in Lemma \ref{lem:gamma}.
\end{lemma}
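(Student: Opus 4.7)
The plan is to reduce this statement to Lemma \ref{lem:gamma} via a union bound. The latter controls the distortion of $\norm{\A \x}^2$ for a single fixed vector $\x$ that is independent of $\A$, whereas here we need the same guarantee to hold simultaneously for every element of the finite set $Q$. Since each vector in $Q$ is a unit vector and is independent of $\A$, Lemma \ref{lem:gamma} applies to each one with no modification.

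First, I would fix an arbitrary $\x \in Q$. Because $\x$ is independent of $\A$ and $\norm{\x} = 1$, Lemma \ref{lem:gamma} yields
\begin{equation*}
\Prob\Big[ 1 - \epsilon \leqslant \norm{\A \x}^2 \leqslant 1 + \epsilon \Big] \geqslant 1 - 2 e^{-m \gamma(\epsilon)},
\end{equation*}
so the complementary (``bad'') event has probability at most $2 e^{-m \gamma(\epsilon)}$. Next, I would introduce the bad event $B_\x := \{ \norm{\A \x}^2 < 1 - \epsilon \ \text{or} \ \norm{\A \x}^2 > 1 + \epsilon\}$ for each $\x \in Q$ and invoke the union bound:
\begin{equation*}
\Prob\!\left[ \bigcup_{\x \in Q} B_\x \right] \leqslant \sum_{\x \in Q} \Prob\!\left[ B_\x \right] \leqslant 2 |Q| \, e^{-m \gamma(\epsilon)}.
\end{equation*}
Finally, the complement of this union is precisely the event that $1 - \epsilon \leqslant \norm{\A \x}^2 \leqslant 1 + \epsilon$ for all $\x \in Q$ simultaneously, which therefore occurs with probability at least $1 - 2 |Q| e^{-m \gamma(\epsilon)}$, as required.

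There is no substantive obstacle here; the argument is a one-line union bound once Lemma \ref{lem:gamma} is in hand. The only minor point worth checking is that the independence hypothesis in Lemma \ref{lem:gamma} is inherited: since the whole set $Q$ is independent of $\A$, each individual $\x \in Q$ is independent of $\A$, so applying Lemma \ref{lem:gamma} pointwise is legitimate. The finiteness of $Q$ is also essential, as it keeps the union-bound factor $|Q|$ finite and lets us keep the exponential concentration in $m$.
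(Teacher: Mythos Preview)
Your proposal is correct. The paper does not actually prove this lemma; it merely states it and attributes it to standard references \cite{Baraniuk2008_RIP_proof,Matouvsek2008_JL_variants,Shwartz2014_UML}, noting that it is the well-known Johnson--Lindenstrauss lemma obtained by generalizing Lemma~\ref{lem:gamma} from a single point to a finite set. Your union-bound argument is exactly the standard derivation behind that generalization, so there is nothing to compare beyond observing that you have supplied the one-line proof the paper omits.
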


For signals having a large number of samples (e.g. images), random Fourier or Hadamard measurements are computationally more efficient than (sub)-Gaussian measurements \cite{Ailon2013_JL_hadamard}. The sensing matrix for the former can be written as
\begin{equation}
\label{eq:A_hadamard}
\A = \frac{1}{\sqrt{m}} \S \F \D,
\end{equation}
where $\S \in \Re^{m \times n}$ is a random subset of $m$ rows of the $n \times n$ identity matrix (a random sampling operator), $\F \in \Co^{n \times n}$ is an orthogonal transform such as the unnormalized discrete Fourier or Walsh-Hadamard transform, and $\D \in \Re^{n \times n}$ is a random diagonal matrix with diagonal entries drawn uniformly from $\{-1,1\}$.
Unlike  Gaussian measurements, Fourier and Hadamard transforms can be computed in $O(n \log n)$ time without storing the matrix $\F$ \cite{Ailon2013_JL_hadamard}.
This is particularly useful for images.
%Note that there are two sources of randomness corresponding to $\S$ and $\D$.
The following is a restatement of \cite[Theorems 2.1 and 3.1]{Ailon2013_JL_hadamard}, which asserts that \eqref{eq:A_hadamard} satisfies a somewhat different version of the JL Lemma.
\begin{lemma}
\label{lem:JL_hadamard}
Let $\A$ be a random $(m \times n)$ matrix in \eqref{eq:A_hadamard}.
Let $Q$ be a finite subset of the unit ball in $\Re^n$, that is independent of $\A$.
If $m = O \big(\epsilon^{-4} (\log |Q|) (\log^4 n) \big)$, then for any $\epsilon \in (0,1)$, with probability at least $0.98 \times 0.99$ we have
\begin{equation*}
1 - O(\epsilon) \leqslant \norm{\A \x} \leqslant 1 + O(\epsilon)
\end{equation*}
uniformly for all $\x \in Q$.
\end{lemma}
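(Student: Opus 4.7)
I would follow the Ailon--Chazelle ``Fast Johnson--Lindenstrauss'' strategy: first show that the preconditioner $\F\D$ spreads any fixed unit vector so that $\F\D\x$ has small $\ell_\infty$ norm, then exploit this flatness to argue that a uniform subsample of $m$ coordinates preserves the $\ell_2$ norm up to a factor of $1 \pm O(\epsilon)$. The stated probability $0.98 \times 0.99$ is precisely the product of the failure probabilities of these two essentially independent steps.

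For the first step, fix $\x \in Q$ with $\norm{\x}=1$ and set $\y := \F\D\x$. Since the entries of $\F$ have unit modulus, each coordinate $y_i = \sum_j F_{ij}\epsilon_j x_j$ is a Rademacher sum with $\sum_j |F_{ij} x_j|^2 = 1$, where $\epsilon_j$ are the i.i.d.\ signs on the diagonal of $\D$. Hoeffding's inequality then yields $\Prob[|y_i| > C\sqrt{\log(1/\delta)}] \leqslant \delta$, and union-bounding over the $n$ coordinates of $\y$ and the $|Q|$ vectors in $Q$, with $\delta = 1/(200 n |Q|)$, produces a ``flatness event'' $E$ of probability at least $0.99$ on which
\begin{equation*}
\norm{\F\D\x}_\infty \leqslant K := C\sqrt{\log(n|Q|)} \quad \text{for every } \x\in Q.
\end{equation*}
For the second step, condition on $E$ and write $\norm{\A\x}^2 = m^{-1}\sum_{k=1}^m y_{j_k}^2$, where $j_1,\ldots,j_m$ are drawn uniformly from $\{1,\ldots,n\}$. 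Because $\F$ is an unnormalized orthogonal transform, $\norm{\y}^2 = n$, so the expectation equals $1$ and each summand lies in $[0, K^2]$. A Bernstein-type tail bound then gives $\Prob[|\,\norm{\A\x}^2 - 1\,| > \epsilon] \leqslant 2\exp(-cm\epsilon^2/K^2)$, and a final union bound over $Q$ drives the sampling failure below $0.02$ whenever $m = \Omega\!\big(\epsilon^{-2} K^2 \log|Q|\big)$.

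\textbf{Main obstacle.} Translating this high-level plan into the exact sample complexity $m = O(\epsilon^{-4}\log|Q|\log^4 n)$ stated in the lemma is the delicate part. The coordinates of $\F\D\x$ are not genuinely independent, so the Bernstein step must be carried out for sampling without replacement, and worst-case vectors (those with many medium-sized coordinates) require a finer two-level argument in which the ``heavy'' and ``light'' coordinates of $\y$ are treated separately, with higher-order moment bounds replacing the crude Hoeffding/Bernstein steps above. Working out this multi-scale concentration carefully is what produces the extra $\epsilon^{-2}$ and $\log^2 n$ factors in the final bound; since the statement is a restatement of \cite[Theorems 2.1 and 3.1]{Ailon2013_JL_hadamard}, I would ultimately defer the tight constants to the arguments there.
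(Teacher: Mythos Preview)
The paper does not prove this lemma at all; it is introduced explicitly as ``a restatement of \cite[Theorems 2.1 and 3.1]{Ailon2013_JL_hadamard}'' and no argument is given. Your proposal therefore goes well beyond what the paper does: you sketch the two-stage FJLT mechanism (Hoeffding to flatten $\F\D\x$ in $\ell_\infty$, then a Bernstein-type bound for the subsampling step), which is indeed the conceptual skeleton behind the cited result, and you correctly recognize that this naive route does not by itself deliver the stated $\epsilon^{-4}\log^4 n$ dependence.

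One minor inaccuracy worth flagging: the refinement that produces the extra $\epsilon^{-2}$ and polylogarithmic factors in the Ailon--Liberty paper is not a heavy/light two-level decomposition of the coordinates as you suggest, but rather a chaining argument together with Rudelson--Vershynin-type bounds on suprema of Rademacher processes indexed by orthonormal systems. Since you ultimately defer the sharp constants to the citation anyway, this does not affect the validity of your proposal, and your treatment is already more informative than the paper's, which simply quotes the result.
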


\begin{figure}[t]
\centering
\includegraphics[width=\linewidth]{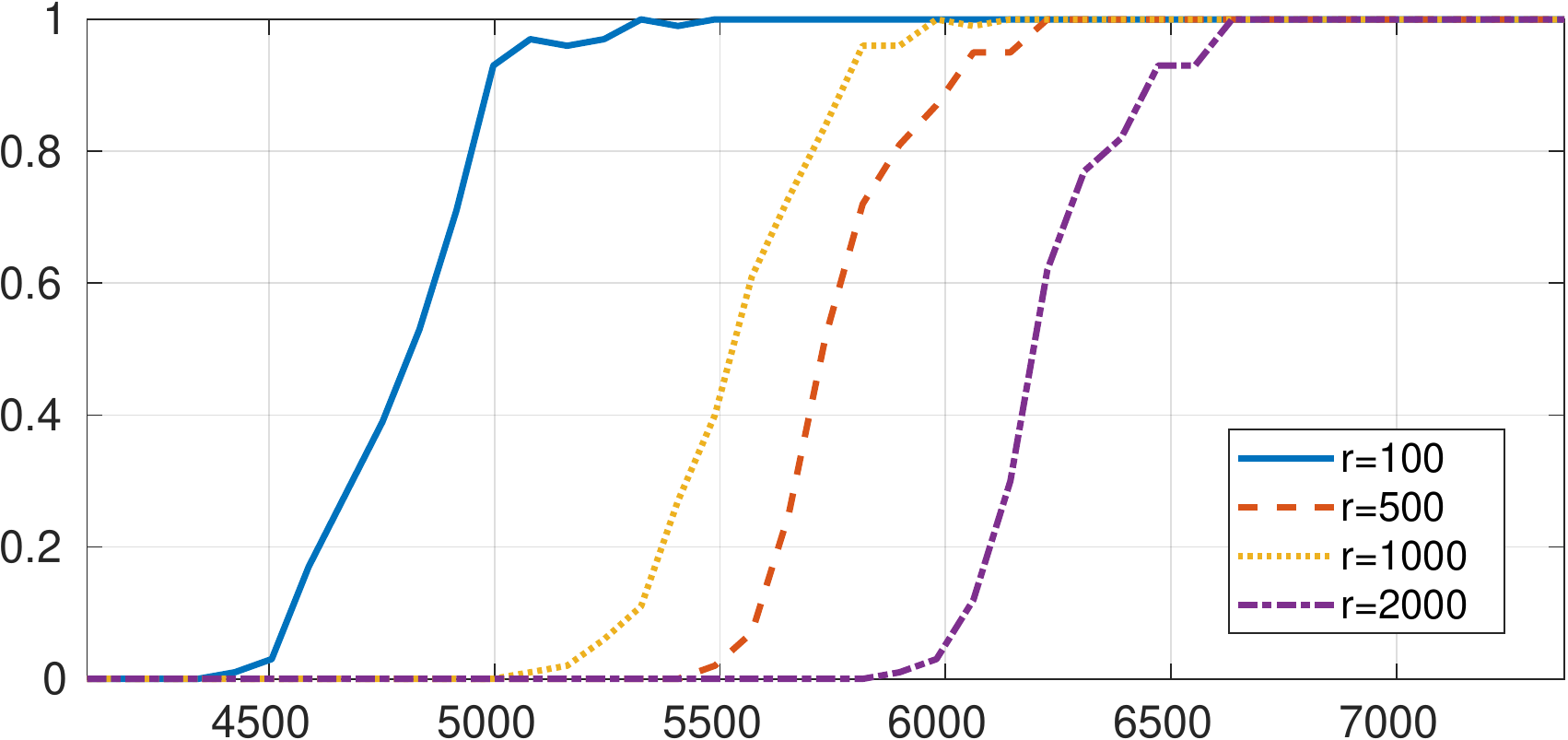}
\caption{Empirical probability of exact recovery (vertical axis) of a $128 \times 128$ image from randomized Fourier measurements as a function of $m$ (horizontal axis) for different values of $r = \rank(\W)$.}
\label{fig:exact_rec_fourier}
\end{figure}

\begin{figure}[t]
\centering
\subfloat[$\bxi$.]{\includegraphics[width=0.33\linewidth]{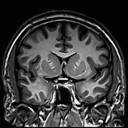}}
\hfill
\subfloat[$\x^\ast$, $m = 200$.]{\includegraphics[width=0.33\linewidth]{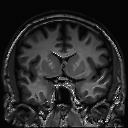}}
\hfill
\subfloat[$\x^\ast$, $m = 6000$.]{\includegraphics[width=0.33\linewidth]{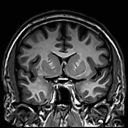}}
\caption{Ground-truth and examples of reconstructed images for the experiment in Fig. \ref{fig:exact_rec_fourier} with $\rank(\W) = 1000$, where $\bxi \in \cR(\W)$. $\A$ is a random realization of \eqref{eq:A_hadamard} in each of the examples in (b) and (c). The image size is $128 \times 128$.}
\label{fig:fourier_images}
\end{figure}

Thus, in principle, it could be possible to derive probabilistic guarantees for exact and robust recovery for the sensing model in \eqref{eq:A_hadamard}.
However, since the  hidden constants in the above $O(\cdot)$ notation are not explicitly given in \cite{Ailon2013_JL_hadamard}, it is difficult to derive an analogue of Theorems \ref{thm:exact_rec_prob_r<m} and \ref{thm:robust_rec_prob_r<m} for this model.
Nevertheless, we can numerically verify that exact recovery can be achieved if $m$ is large enough.
We perform a similar experiment to that in Section \ref{subsec:exact_rademacher}, but for randomized Fourier measurements.
We take the image size to be $128 \times 128$ ($n = 16384$).
A plot of the empirical probabilities of exact recovery is shown in Fig. \ref{fig:exact_rec_fourier} for four different values of $r$.
Note that the general trend is similar to what we expect, i.e., more measurements are required for a higher probability of exact recovery.
We leave a rigorous analysis of this observation for future work.
Fig. \ref{fig:fourier_images} shows a visual example of the recovered images.

\subsection{Independence of $\W$ and $\A$}
\label{subsec:independence}

Note that in the theorems in Section \ref{sec:main}, we require $\W$ to be statistically independent of the random matrix $\A$.
Practical denoisers such as DSG-NLM and GLIDE require access to a guide signal/image to populate $\W$.
In the experiments in Section \ref{sec:numerical}, we constructed $\W$ from some fixed guide signal, and generated random observations via $\A$ independently of $\W$.
This automatically ensured the independence of $\W$ and $\A$.
We did this because we were interested in observing the recovery behavior when $\W$ is fixed and $\A$ is random, and therefore, it was necessary to fix a common $\W$ for all random realizations of $\A$.
However, in practical CS reconstruction scenarios, we do not a priori have access to a guide signal.
Instead, the guide signal is obtained by applying preprocessing techniques to the observation $\b$.
For example, we can run a small number of PnP iterations, say $l$, in which $\W$ is generated using the image in the previous iteration, and then keep $\W$ fixed from the $(l+1)^{\rm th}$ iteration onward, e.g., see \cite{Sreehari2016_PnP,Gavaskar2020_PnP-ISTA_kernel,Gavaskar2021_PnP_linear_denoisers,Nair2021_PnP_fixed_point}.
The guide signal is thus the image in the $l^{\rm th}$ iteration, which indirectly depends on $\b$, and hence on $\A$ (since $\b = \A \bxi + \boldeta$).
Therefore, strictly speaking, $\W$ is not independent of $\A$.
However, the relationship between $\W$ and $\A$ is complicated due to the technique used to generate the guide signal.
The upside is that the statistical independence of $\W$ and $\A$ seems to be a reasonable assumption in practice.
This is similar to the following claim in \cite{Milanfar2013_filtering_tour}: for image denoising, computing $\W$ from a pre-filtered version of the noisy image (as opposed to the noisy image itself) largely removes the statistical dependence of $\W$ on the noise.

\subsection{Role of $\cR(\W)$}
\label{subsec:subspace_prior}

Note that the lower bounds on $m$ in Theorems \ref{thm:exact_rec_prob_gaussian_r<m}, \ref{thm:exact_rec_prob_r<m} and \ref{thm:robust_rec_prob_r<m} depend only on $r$, the dimension of $\Img(\W)$, and not on the ambient dimension $n$.
In contrast, in classical compressed sensing, for exactly recovering an $r$-sparse signal we need $m \geqslant O(r \log n)$ \cite[Sec. 23.3]{Shwartz2014_UML}; note that the lower bound on $m$ depends on $n$.
The reason that $n$ appears in this bound can be attributed to the fact that the set of $r$-sparse signals is the union of $\binom{n}{r}$ subspaces of dimension $r$.
% (since there are $\binom{n}{r}$ ways to choose $r$ indices out of $n$ corresponding to the non-zero entries).
On the other hand, the PnP regularizer \eqref{eq:kernel-regularizer} forces the solution of \eqref{eq:noisy_prob} and \eqref{eq:noiseless_prob} to lie in $\Img(\W)$.
Thus, in our analysis, we need to consider only signals in the subspace $\Img(\W)$ instead of a union of subspaces.
Since the dimension of $\Img(\W)$ has no relation with the ambient dimension $n$, the lower bounds on $m$ in Theorems \ref{thm:exact_rec_prob_gaussian_r<m}, \ref{thm:exact_rec_prob_r<m} and \ref{thm:robust_rec_prob_r<m} depend only on $r$ and not on $n$.

On a related note, the prior that the ground-truth is in $\Img(\W)$ is strong enough to yield non-trivial recovery guarantees in the form of Theorems \ref{thm:exact_rec_prob_gaussian_r<m}, \ref{thm:exact_rec_prob_r<m} and \ref{thm:robust_rec_prob_r<m} without explicitly using the real-valued part of the objective function, $\x^\top (\I - \W) \W^\dagger \x$.
Indeed, it can be observed from the proofs that the only information about $\Phi_{\W}$ that is used is that it is infinite outside $\Img(\W)$, implying that the solution lies in $\Img(\W)$.
In this aspect, our recovery guarantees are similar to \cite{Bora2017_CSGM}.
In \cite{Bora2017_CSGM}, the reconstructed signal is regularized by requiring it to lie in the range of a generative model instead of using an explicit regularization function.
However, we note that the real-valued component of the objective function $\Phi_{\W}$ is used in the proof of Theorem \ref{thm:improbability}.

Recall from the experiments in Sections \ref{subsec:exact_rademacher} and \ref{subsec:robust} that the bounds on $m$ in Theorems \ref{thm:exact_rec_prob_r<m} and \ref{thm:robust_rec_prob_r<m} are loose.
A possible avenue to make these tighter could be using the properties of the real-valued component $\x^\top (\I - \W) \W^\dagger \x$.
However, this is a non-trivial task and is left for future work.

\subsection{Closing Remarks and Future Work}
\label{sec:closing_remarks}

We can extend the results in Section \ref{sec:main} to {\em proximable} non-symmetric linear denoisers characterized in \cite{Gavaskar2021_PnP_linear_denoisers}.
For non-symmetric kernel filters of the form $\W = \D^{-1} \K$ (e.g., NLM, bilateral filter and LARK \cite{Milanfar2013_filtering_tour}), where the normalization matrix $\D \in \Sy^n_{++}$ and kernel matrix $\K \in \Sy^n_+$, the spectrum of $\W$ lies in $[0,1]$ and $\W$ is semisimple \cite{Gavaskar2021_PnP_linear_denoisers}. 
Let $\V \LA \V^{-1}$ be an eigenvalue decomposition of $\W$ such that $\V\V^\top=\D^{-1}$ \cite[Sec. IV-B]{Gavaskar2021_PnP_linear_denoisers}. 
Subsequently, using the exposition in \cite{Gavaskar2021_PnP_linear_denoisers} and \cite{Nair2021_PnP_fixed_point}, we can associate the following regularizer (up to a scalar multiplication) with $\W$:
%\footnote{Similar to Theorem \ref{thm:teodoro}, the expression of the associated regularizer given in \cite{Gavaskar2021_PnP_linear_denoisers} can be shown to be identical to the expression given here.}
\begin{equation*}
\Phi_{\W}(\x) =
\begin{cases}
\tfrac{1}{2} \big\langle  (\I - \W)\x, \W^{\mathfrak{g}} \x \big\rangle_\D, & \mathrm{if} \ \x \in \cR(\W),\\
\infty, & \mathrm{otherwise},
\end{cases}
\end{equation*}
where $\W^{\mathfrak{g}} := \V \LA^\dagger \V^{-1}$ is a {\em reflexive generalized inverse} of $\W$ \cite[Def. 2]{Deutsch1971_semi_inverses} and $\langle \cdot\,,\cdot \rangle_\D$ is the inner-product w.r.t. $\D \in \Sy^n_{++}$. 

At the end of Sections \ref{subsec:Fourier-Hadamard} and \ref{subsec:subspace_prior}, we have discussed a few open questions arising from the current work and possible directions of future research. Furthermore, an interesting question that will be explored in an upcoming work is bounding the recovery error for general linear inverse problems such as deblurring, inpainting, and superresolution.
In this regard, it can be shown that under suitable conditions,
\begin{equation}
\label{eq:robust_bound_high_rank}
\|\x^\ast - \bxi\| \leqslant c(\W,\A) \sqrt{d(\bxi,\W,\A)^2 + \big( \| \boldeta \| + \delta \big)^2},
\end{equation}
where $\x^\ast$ is the minimizer of \eqref{eq:noisy_prob}.
Interestingly, this bound requires that $\rank(\W) \geqslant m$.
The difficulty with this bound is that for the compressed sensing problem, $c$ and $d$ in \eqref{eq:robust_bound_high_rank} are random variables since they depend on $\A$.
Subsequently, unlike \eqref{eq:robust_bound_r<m}, we do not get a global bound on the recovery error from \eqref{eq:robust_bound_high_rank}. 
This question will be investigated in future work.

\appendix
\section{}
\subsection{Auxiliary Results}
\label{sec:appendix:auxiliary_results}

We state a couple of auxiliary results which are used in Section \ref{sec:proofs}.
%The following lemma is used in the proofs of Theorems \ref{thm:improbability} and \ref{thm:exact_rec_prob_gaussian_r<m}.
\begin{lemma}
\label{lem:lin-dependence-general}
Let $\y_1,\ldots,\y_k$ be independent $\Re^n$-valued random vectors, where $k \leqslant n$, with distributions which are absolutely continuous with respect to the Lebesgue measure on $\Re^n$.
Let $\v_1,\ldots,\v_l \in \Re^n$ be linearly independent vectors, where $k+l \leqslant n$. 
Let $B$ be the event that $\v_1,\ldots,\v_l,\y_1,\ldots,\y_k$ are linearly dependent. 
Then $\Prob(B)=0$.
\end{lemma}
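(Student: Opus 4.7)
The plan is to argue by induction on $k$, using the basic measure-theoretic fact that every proper linear subspace of $\Re^n$ has Lebesgue measure zero, combined with the absolute continuity hypothesis.

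For the base case $k=0$, the set $\{\v_1,\ldots,\v_l\}$ is linearly independent by assumption, so $\Prob(B)=0$ trivially. For the inductive step, assume the claim holds for $k-1$ random vectors, i.e., with probability one the vectors $\v_1,\ldots,\v_l,\y_1,\ldots,\y_{k-1}$ are linearly independent. Let $A$ denote this almost sure event. On $A$, these $l+k-1$ vectors span a subspace $V := \Span(\v_1,\ldots,\v_l,\y_1,\ldots,\y_{k-1}) \subseteq \Re^n$ of dimension $l+k-1$. Since $l+k \leqslant n$, we have $\dim V \leqslant n-1$, so $V$ is a proper subspace of $\Re^n$ and hence has Lebesgue measure zero.

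Next, I would observe that on $A$, the event $B$ occurs if and only if $\y_k \in V$. Conditioning on $(\y_1,\ldots,\y_{k-1})$, the subspace $V$ is deterministic, and by independence $\y_k$ still has its original absolutely continuous distribution. Absolute continuity with respect to Lebesgue measure then gives $\Prob\bigl[\y_k \in V \,\big|\, \y_1,\ldots,\y_{k-1}\bigr]=0$ on the event $A$. Integrating out and using the union bound,
\begin{equation*}
\Prob(B) \leqslant \Prob(A^c) + \Prob(B \cap A) = 0 + 0 = 0,
\end{equation*}
which closes the induction.

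The only mildly delicate step is handling the conditioning rigorously: one needs $V$ to be a measurable function of $(\y_1,\ldots,\y_{k-1})$ and to invoke Fubini/independence to pass from the conditional null probability to the unconditional one. This is routine once one writes $\Prob(\y_k \in V) = \int \Prob(\y_k \in V(\omega))\, dP(\omega)$ using the product structure of the joint distribution, so I do not expect a genuine obstacle. The core content of the lemma is simply that an absolutely continuous random vector avoids any prescribed lower-dimensional subspace almost surely, applied inductively.
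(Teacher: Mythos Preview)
Your proof is correct and essentially coincides with the paper's argument. The paper decomposes $B$ as the union over $i$ of the events $\{\y_i \in \Span(\v_1,\ldots,\v_l,\y_1,\ldots,\y_{i-1})\}$ and shows each has probability zero via the same conditioning-plus-absolute-continuity step you describe; your induction is simply this union bound reorganized, with the inductive hypothesis playing the role of the $i$th partial independence.
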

\begin{proof}
Since $\v_1,\ldots,\v_l$ are linearly independent, the event $B$ occurs if and only if there exists $i \in \{1,\ldots,k\}$ such that $\y_i \in \Span(\v_1,\ldots,\v_l,\y_1,\ldots,\y_{i-1})$.
Let $p=\Prob(B)$; note that
\begin{align*}
p&=\Prob \bigg(\bigcup_{i=1}^k \big[\y_i \in \Span(\v_1,\ldots,\v_l,\y_1,\ldots,\y_{i-1})\big] \bigg) \\
&\leqslant \sum_{i=1}^k \Prob\big[\y_i \in \Span(\v_1,\ldots,\v_l,\y_1,\ldots,\y_{i-1})\big].
\end{align*}
Let $p_i$ denote the $i^{\rm th}$ term in the above sum.
To show $p=0$, it suffices to show that each $p_i=0$.
Notice that for $i \geqslant 2$,
\begin{align}
\label{eq:exp-of-prob-general}
p_i = \E \Big[\!\Prob\!\big[\y_i \in \Span(\v_1,\ldots,\v_l,\y_1&,\ldots,\y_{i-1}) \big| \nonumber \\ &\y_1,\ldots,\y_{i-1}\big]\!\Big], 
\end{align}
where the expectation is with respect to $\y_1,\ldots,\y_{i-1}$.
Note that for arbitrary $\c_1, \ldots, \c_{i-1} \in \Re^n$, since $\y_1,\ldots,\y_k$ are independent random vectors,
\begin{align}
&\Prob \Big[\y_i \in \Span(\v_1,\ldots,\v_l,\y_1,\ldots,\y_{i-1}) \big| \nonumber \\
& \hspace{4cm} \y_1=\c_1,\ldots,\y_{i-1}=\c_{i-1} \Big] \nonumber\\
&= \Prob\big[\y_i \in \Span(\v_1,\ldots,\v_l,\c_1,\ldots,\c_{i-1})\big]. \label{eq:probability-span-general}
\end{align}
Moreover, since $1 \leqslant i \leqslant k$ and $k+l \leqslant n$, the dimension of $\Span(\v_1,\ldots,\v_l,\c_1,\ldots,\c_{i-1}) \subseteq \Re^n$ is strictly less than $n$, and hence its Lebesgue measure is zero. 
Now, since $\y_1,\ldots,\y_k$ are absolutely continuous random vectors, it follows from \eqref{eq:exp-of-prob-general} and \eqref{eq:probability-span-general} that $p_i = 0$ for $i \geqslant 2$.
A similar argument holds for the case $i=1$.
\end{proof}
%The following lemma is used in the proofs of Theorems \ref{thm:exact_rec_prob_r<m} and \ref{thm:robust_rec_prob_r<m}.
The following Lemma is a straightforward generalization of \cite[Lemma 5.1]{Baraniuk2008_RIP_proof}, and it can be proved along the same lines. 
The only difference is that in Lemma \ref{lem:subspace_bound}, $\Ur$ is an arbitrary subspace of $\Re^n$, whereas \cite[Lemma 5.1]{Baraniuk2008_RIP_proof} focuses on the case where $\Ur$ is a canonical subspace.
\begin{lemma}
\label{lem:subspace_bound}
Let $\A$ be a random subgaussian matrix.
Let $\Ur$ be a fixed subspace of $\Re^n$ that is independent of $\A$, such that $\dim \Ur = r < n$.
Then, for any $\epsilon \in (0,1)$, with probability at least $1 - 2 (12/\epsilon)^r e^{-m \gamma(\epsilon/2)}$ we have
\begin{equation}
\label{eq:subspace_bound}
(1 - \epsilon) \norm{\x} \leqslant \norm{\A \x} \leqslant (1 + \epsilon) \norm{\x} \qquad \forall \x \in \Ur,
\end{equation}
where $\gamma$ is the function in Lemma \ref{lem:gamma}.
\end{lemma}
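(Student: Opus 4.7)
The plan is to follow the classical covering-plus-union-bound argument used to prove the restricted isometry property for subgaussian matrices \cite{Baraniuk2008_RIP_proof}, adapted from a canonical coordinate subspace to an arbitrary $r$-dimensional subspace $\Ur$. By homogeneity of both sides of \eqref{eq:subspace_bound}, it suffices to establish the conclusion for all unit vectors in the sphere $S := \{\x \in \Ur : \norm{\x} = 1\}$.

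First I would pick a (deterministic) $\epsilon/4$-net $N \subseteq S$ with respect to the Euclidean metric. Since $\Ur$ is $r$-dimensional, $S$ is isometric to the unit sphere of $\Re^r$, so a standard volumetric packing argument yields $|N| \leqslant (1 + 8/\epsilon)^r \leqslant (12/\epsilon)^r$ for $\epsilon \in (0,1)$. Crucially, $N$ is a function of $\Ur$ only, hence independent of $\A$, so Lemma \ref{lem:gamma} applies with parameter $\epsilon/2$ at each $\q \in N$. A union bound over $N$ then shows that with probability at least $1 - 2(12/\epsilon)^r e^{-m\gamma(\epsilon/2)}$, the event
\[
\mathcal{E} : \ (1 - \epsilon/2) \leqslant \norm{\A \q}^2 \leqslant (1 + \epsilon/2) \quad \text{for every } \q \in N
\]
occurs. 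On $\mathcal{E}$, it remains to promote the estimate from the finite net $N$ to the entire sphere $S$.

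For this final step, assume $\mathcal{E}$ holds and set $M := \sup_{\x \in S} \norm{\A \x}$. Given any $\x \in S$, choose $\q \in N$ with $\norm{\x - \q} \leqslant \epsilon/4$; since $\x - \q \in \Ur$, the triangle inequality together with the definition of $M$ gives
\[
\norm{\A \x} \leqslant \sqrt{1 + \epsilon/2} + M \cdot (\epsilon/4), \quad \norm{\A \x} \geqslant \sqrt{1 - \epsilon/2} - M \cdot (\epsilon/4).
\]
Taking the supremum over $\x \in S$ in the first inequality yields $M \leqslant \sqrt{1+\epsilon/2}/(1-\epsilon/4)$, which one can verify is at most $1+\epsilon$ for every $\epsilon \in (0,1)$ via elementary algebra. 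Substituting this bound on $M$ back into the second inequality and using $\sqrt{1-\epsilon/2} \geqslant 1 - \epsilon/2$ produces the matching lower bound $\norm{\A \x} \geqslant 1 - \epsilon$, completing the proof. I do not anticipate any serious obstacle here; the only care required is in the closing chain of constant-chasing inequalities, where the specific choice of mesh size $\epsilon/4$ (rather than, say, $\epsilon/2$) is precisely what is needed to convert the $(1 \pm \epsilon/2)$ concentration on the net into the clean $(1 \pm \epsilon)$ conclusion on all of $\Ur$, while keeping the net cardinality bounded by $(12/\epsilon)^r$ as stated.
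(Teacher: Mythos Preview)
Your proposal is correct and follows precisely the approach the paper points to: the authors do not write out a proof but state that the lemma ``is a straightforward generalization of \cite[Lemma 5.1]{Baraniuk2008_RIP_proof}, and it can be proved along the same lines,'' and your $\epsilon/4$-net plus union-bound argument is exactly that proof, with the only change being that the net is taken inside an arbitrary $r$-dimensional subspace $\Ur$ rather than a canonical coordinate subspace. The constant-chasing at the end checks out (in particular $\sqrt{1+\epsilon/2}/(1-\epsilon/4)\leqslant 1+\epsilon$ and $1-\epsilon/2-(1+\epsilon)\epsilon/4\geqslant 1-\epsilon$ both hold for all $\epsilon\in(0,1)$), so nothing further is needed.
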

%\textcolor{red}{Lemma \ref{lem:subspace_bound} is essentially a generalization of \cite[Lemma 5.1]{Baraniuk2008_RIP_proof}; the only difference is that in Lemma \ref{lem:subspace_bound} the subspace $\Ur$ is arbitrary, whereas \cite[Lemma 5.1]{Baraniuk2008_RIP_proof} focuses on the case where $\Ur$ is one of the canonical subspaces of $\Re^n$.
%Lemma \ref{lem:subspace_bound} can be proved along the same lines as \cite[Lemma 5.1]{Baraniuk2008_RIP_proof}.
%}

\subsection{When Does \eqref{eq:m_bound_robust} Hold?}
\label{sec:minimum_n}

Let $L\colon (0,1] \times (0,1] \to \Re_+$ be defined as follows:
\begin{equation}
\label{eq:def_L}
L(\beta, \epsilon) := \frac{\ln(4/\beta) + r \ln(12/\epsilon)}{\gamma(\epsilon/2)};
\end{equation} 
this gives the lower bound in \eqref{eq:m_bound_robust}.
Recall from Section \ref{sec:main} that for random subgaussian matrices, $\gamma$ is a continuous and strictly increasing function.
Notice that the numerator and denominator in \eqref{eq:def_L} are strictly decreasing and increasing functions of $\epsilon$. 
Furthermore, 
\begin{equation*}
\frac{\partial L}{\partial \beta} = \frac{-1/\beta}{\gamma(\epsilon/2)} < 0.
\end{equation*}
Therefore, $L$ is a strictly decreasing function in each variable. 
Consequently,
\begin{equation}
\label{eq:inf_L}
\inf_{\beta,\epsilon \in (0,1]} L(\beta, \epsilon) = L(1,1).
\end{equation}
Moreover, note that for arbitrary $\hat{\beta},\hat{\epsilon} \in (0,1]$, we have 
\begin{equation}
\label{eq:zero-coercivity of L}
\lim_{\beta \to 0} L(\beta, \hat{\epsilon}) = \lim_{\epsilon \to 0} L(\hat{\beta}, \epsilon) = \infty.
\end{equation}

The following lemma asserts that if $n$ is sufficiently large, then we can get legitimate lower bounds on $m$ from \eqref{eq:m_bound_robust}.
\begin{proposition}
\label{prp:lower-bound on m}
If $n > L(1,1)$, then there exist unique $\beta_0, \epsilon_0 \in (0,1)$ such that $L(\beta_0,1)=L(1,\epsilon_0)=n$. Furthermore,
\begin{enumerate}[label=(\roman*)]
\item Given $\beta_1 \in (\beta_0, 1)$, there exists unique $\epsilon_1 \in (\epsilon_0, 1)$ such that $L(\beta_1,\epsilon_1) = n$ and $L(\beta_1,\epsilon) < n$ for all $\epsilon \in (\epsilon_1, 1)$. 
\item Given $\epsilon_1 \in (\epsilon_0, 1)$, there exists unique $\beta_1 \in (\beta_0, 1)$ such that $L(\beta_1,\epsilon_1) = n$ and $L(\beta,\epsilon_1) < n$ for all $\beta \in (\beta_1, 1)$.
\end{enumerate}
\end{proposition}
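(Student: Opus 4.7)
The plan is to use the two structural facts about $L$ already established before the proposition, namely that $L$ is continuous and strictly decreasing in each of its two arguments (as a direct consequence of the continuity and strict monotonicity of $\gamma$), and that $L$ blows up as either coordinate approaches $0$, as recorded in \eqref{eq:zero-coercivity of L}. With these in hand, the entire proposition reduces to a bookkeeping exercise based on the intermediate value theorem (IVT).

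First I would construct $\beta_0$ and $\epsilon_0$. Define the single-variable slice $g(\beta) := L(\beta,1)$ on $(0,1]$. This is continuous and strictly decreasing, with $g(1) = L(1,1) < n$ by the standing hypothesis $n > L(1,1)$, and $\lim_{\beta \to 0^+} g(\beta) = +\infty$ from \eqref{eq:zero-coercivity of L}. The IVT together with strict monotonicity therefore produces a unique $\beta_0 \in (0,1)$ with $L(\beta_0,1) = n$. The symmetric argument applied to $\epsilon \mapsto L(1,\epsilon)$ yields the unique $\epsilon_0 \in (0,1)$ with $L(1,\epsilon_0)=n$.

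For part (i), fix $\beta_1 \in (\beta_0,1)$ and look at $h(\epsilon) := L(\beta_1,\epsilon)$, continuous and strictly decreasing. Strict monotonicity in the first argument gives $h(1) = L(\beta_1,1) < L(\beta_0,1) = n$, while $\lim_{\epsilon \to 0^+} h(\epsilon) = +\infty$. Another IVT pass gives a unique $\epsilon_1 \in (0,1)$ with $h(\epsilon_1) = n$; strict monotonicity of $h$ then gives $L(\beta_1,\epsilon) < n$ for every $\epsilon \in (\epsilon_1,1)$. To locate $\epsilon_1$ above $\epsilon_0$, I would use monotonicity in the first argument in the opposite direction: since $\beta_1 < 1$, $L(\beta_1,\epsilon_0) > L(1,\epsilon_0) = n = L(\beta_1,\epsilon_1)$, and strict decrease of $h$ forces $\epsilon_1 > \epsilon_0$. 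Part (ii) is literally the same argument with the roles of the two coordinates swapped, using $L(1,\epsilon_1) < L(1,\epsilon_0) = n$ to pin down the boundary value at $\beta = 1$.

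There is no substantive obstacle here; the proof is structural. The only points that require care are (a) keeping the direction of each inequality straight, given that $L$ decreases in \emph{both} variables, and (b) verifying that the two intermediate points $\beta_0,\epsilon_0$ end up strictly inside $(0,1)$ rather than at an endpoint, which is exactly why the hypothesis $n > L(1,1)$ (rather than $\geqslant$) is needed and why one invokes \eqref{eq:zero-coercivity of L} at the $0$-side. The rest is routine IVT plus strict monotonicity.
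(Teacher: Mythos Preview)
Your proposal is correct and follows essentially the same approach as the paper: both arguments reduce the proposition to the intermediate value theorem combined with the strict monotonicity of $L$ in each variable and the blow-up \eqref{eq:zero-coercivity of L}. The only cosmetic difference is that the paper localizes the IVT for part~(i) directly to $(\epsilon_0,1)$ by first noting $L(\beta_1,\epsilon_0) > L(1,\epsilon_0) = n$, whereas you run IVT on all of $(0,1)$ and then verify $\epsilon_1 > \epsilon_0$ afterward; these are logically equivalent.
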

\begin{proof}
Since $n > L(1,1)$, using \eqref{eq:inf_L}, \eqref{eq:zero-coercivity of L} along with the fact that $L$ is continuous and a strictly decreasing function in each variable, we can conclude that there exist unique $\beta_0, \epsilon_0 \in (0,1)$ such that $L(\beta_0,1)=L(1,\epsilon_0)=n$.

Next, we prove the statement-(i); notice that statement-(ii) can be proved along similar lines. 
Given $\beta_1 \in (\beta_0, 1)$, since $L$ is a strictly decreasing function in each variable, we have 
\begin{align*}
L(\beta_1,1) &< L(\beta_0,1) = n, \\
L(\beta_1,\epsilon_0) &> L(1,\epsilon_0) = n.
\end{align*}
Subsequently, using the fact $L$ is continuous and a strictly decreasing function in the second variable, we can conclude that there exists unique $\epsilon_1 \in (\epsilon_0, 1)$ such that $L(\beta_1,\epsilon_1) = n$ and $L(\beta_1,\epsilon) < n$ for all $\epsilon \in (\epsilon_1, 1)$.
\end{proof}

\section*{Declaration of Competing Interest}
The authors declare that they have no known competing financial interests or personal relationships that could have appeared to influence the work reported in this paper.

\section*{CRediT Authorship Contribution Statement}
{\bf Ruturaj G. Gavaskar:} Formal analysis, Methodology, Writing - original draft, Writing - review \& editing, Software, Validation. {\bf Chirayu D. Athalye:} Formal analysis, Methodology, Writing - original draft, Writing - review \& editing, Supervision. {\bf Kunal N. Chaudhury:} Conceptualization, Methodology, Formal analysis, Writing - review \& editing, Supervision, Project administration, Funding acquisition.

%\section*{Data Availability}

\section*{Acknowledgments}
The work of Chirayu D. Athalye was supported by Department of Science and Technology, Government of India under Grant IFA17-ENG227.
The work of Kunal N. Chaudhury was supported by Core Research Grant CRG/2020/000527 and SERB-STAR Award STR/2021/000011 from the Department of Science and Technology, Government of India.

\bibliographystyle{elsarticle-num}
\bibliography{citations}

\end{document}